\newtheorem{theorem}{Theorem}
\definecolor{darkgreen}{rgb}{0.0, 0.4, 0.0}
\DeclareMathAlphabet\mathbfcal{OMS}{cmsy}{b}{n}
\DeclareMathOperator*{\st}{subject~to}
\DeclareMathOperator*{\Tr}{Tr}
\DeclareMathOperator*{\argmin}{arg\,min}
\newcommand{\norm}[1]{\left\lVert#1\right\rVert}
\newcommand{\normtwo}[1]{\left\lVert#1\right\rVert_{2}}
\newcommand{\frobenius}[1]{\left\lVert#1\right\rVert_{F}}
\newcommand{\hatbf}[1]{\widehat{\mathbf{#1}}}
\newcommand{\starbf}[1]{\mathbf{#1}^\star}
\newcommand{\starhatbf}[1]{\widehat{\mathbf{#1}}^\star}
\newcommand{\starhatbm}[1]{\widehat{\bm{#1}}^\star}
\newcommand{\hatbm}[1]{\widehat{\bm{#1}}}
\newcommand{\starbm}[1]{\bm{#1}^\star}
\newcommand{\tildebm}[1]{\widetilde{\bm{#1}}}
\newcommand{\preprintswitch}[2]{#2} 
\newtheorem{lemma}{Lemma}
\newtheorem{proposition}{Proposition}
\newtheorem{definition}{Definition} 
\newtheorem{remark}{Remark}
\newtheorem{assumption}{Assumption}
\begin{document}
\title{
	 \LARGE A Behavioral Input-Output Parametrization of Control Policies with Suboptimality Guarantees
	}

	 \author{Luca Furieri, Baiwei Guo\IEEEauthorrefmark{1}, Andrea Martin\IEEEauthorrefmark{1}, and Giancarlo Ferrari-Trecate
    
    \thanks{Authors are with the Institute of Mechanical Engineering, École Polytechnique Fédérale de Lausanne, Switzerland. E-mails: {\tt\footnotesize \{luca.furieri, baiwei.guo, andrea.martin, giancarlo.ferraritrecate\}@epfl.ch}} 

    
    \thanks{\IEEEauthorrefmark{1}Baiwei Guo and Andrea Martin contributed equally to this work.} 
    
    \thanks{Research supported by the Swiss National Science Foundation under the NCCR Automation (grant agreement 51NF40\textunderscore 80545).} %
    }

\maketitle
	\allowdisplaybreaks
	
	\IEEEpeerreviewmaketitle

\begin{abstract}

Recent work in data-driven control has revived behavioral theory to perform a variety of complex control tasks, by directly plugging libraries of past input-output trajectories into optimal control problems. Despite recent advances, a key aspect remains unclear: how and to what extent do noise-corrupted data impact control performance? In this work, we provide a quantitative answer to this question. We formulate a Behavioral version of the Input-Output Parametrization (BIOP) for the optimal predictive control of unknown systems using output-feedback dynamic control policies. 
The main advantages of the proposed framework are that 1) the state-space parameters and the initial state need not be specified for controller synthesis, 2) it can be used in combination with state-of-the-art impulse response estimators, 
and 3) it allows to recover suboptimality results on learning the Linear Quadratic Gaussian (LQG) controller, therefore revealing, in a quantitative way, how  the level of noise in the data affects the performance of behavioral methods. Specifically, it is shown that the performance degrades linearly with the prediction error of the behavioral model. We conclude the paper with numerical experiments to validate our results.

\end{abstract}
\section{Introduction}
Several safety-critical engineering systems that play a crucial role in our modern society are becoming too complex to be accurately modeled through white-box models \cite{lamnabhi2017systems}. As a consequence, most modern control perspectives envision unknown black-box systems for which an optimal behavior must be attained by solely relying on a collection of historical system's output trajectories in response to different inputs.

Widely speaking, we can design optimal controllers from data according to two paradigms. The first category contains \emph{model-based} methods, where historical input-output trajectories are exploited to approximate the system parameters, and a suitable controller is computed for this estimated model. The second category contains \emph{model-free} methods, where one aims to learn the best control policy directly by observing historical trajectories, without explicitly reconstructing an internal representation of the dynamical system. Both approaches possess their own potential and limitations; among numerous recent surveys, we refer to  \cite{recht2019tour}.

Given the intricacy of establishing rigorous suboptimality and sample-complexity bounds, most recent model-based and model-free approaches have focused on basic Linear Quadratic Regulator (LQR) and Linear Quadratic Gaussian (LQG) control problems as suitable benchmarks to establish how machine learning  can be interfaced to the continuous action spaces typical of control~\cite{dean2019sample, fazel2018global,malik2018derivative,zhengfurieri2020sample,simchowitz2020improper,lale2020logarithmic,tsiamis2020sample,zhang2020policy}. When it comes to complex tasks, such as constrained and distributed control, it is more challenging to perform a rigorous probabilistic analysis. Recent advances include \cite{dean2019safely,fattahi2020efficient} for constrained and distributed LQR control with direct state measurements and \cite{furieri2020learning} for distributed output-feedback LQG.

A promising data-driven approach that aims at bypassing a parametric description of the system dynamics, while still being conceptually simple to implement for the users, hinges on the \emph{behavioral framework} \cite{willems1997introduction}. This approach has gained renewed interest with the introduction of Data-EnablEd Predictive Control \cite{coulson2019data,coulson2020distributionally,dorfler2021bridging}, which established that constrained output reference tracking can be effectively tackled in a Model-Predictive-Control (MPC) fashion by plugging adequately generated historical data into a convex optimization problem. 
In parallel, \cite{de2019formulas} introduced data-driven formulations for some controller design tasks. These works inspired several extensions including closed-loop control with stability guarantees \cite{berberich2020data}, maximum-likelihood identification for control \cite{iannelli2020experiment,yin2020maximum}, and nonlinear variants \cite{lian2021nonlinear}.

In practice, however, historical data are corrupted by noise and the quality and coherency of the achieved solutions may be compromised. While several approaches have recently been proposed,  e.g. \cite{coulson2020distributionally,alpago2020extended,yin2020maximum}, a complete quantitative analysis for the noisy case is still unavailable.  Recently, \cite{xue2020data} has derived suboptimality and sample-complexity bounds through a data-driven formulation of the System Level Synthesis (SLS) approach. However, a limiting assumption in \cite{xue2020data} is that the internal system states can be measured directly. 

Our main contribution is to propose a behavioral optimal control framework for partially observed systems. Specifically, we leverage recent Input-Output Parametrization (IOP) tools \cite{furieri2019input} for optimal output-feedback controller design and set up a data-driven  formulation built upon behavioral theory; we denote the resulting framework as Behavioral IOP (BIOP). The advantages of the proposed BIOP are threefold. First, it solely relies on libraries of past input-output trajectories, therefore enabling optimal controller synthesis without specifying the system's state-space parameters and the value of the state at the initial time. Second, the system impulse response is replaced by a suitable linear combination of historical noisy input-output trajectories, which may encompass, for instance, standard least-squares solutions \cite{oymak2019non}, data-enabled Kalman filtering \cite{alpago2020extended}, and the recently proposed signal matrix models (SMM) \cite{yin2020maximum,iannelli2020experiment}. Third, our framework allows one to quantify the incurred suboptimality as a function of the level of the noise  corrupting the available data; this is achieved  by first establishing a tractable method to synthesize robust BIOP controllers and then adapting recent results from  \cite{zhengfurieri2020sample}. As a further contribution, we include the effect of a non-zero noisy initial condition in the analysis. 
To the best of our knowledge, noise-dependent suboptimality guarantees on using behavioral theory for output-feedback control have not been established before.

We formulate the control problem in Section~\ref{sec:prob_statement}. Section~\ref{sec:BIOP_nonnoisy} derives the BIOP, a data-driven version of the IOP valid when the data are noiseless. Section~\ref{sec:BIOP_noisy} establishes a tractable robust version of the BIOP which can be used when the data are noisy. Section~\ref{sec:suboptimality} formally quantifies the suboptimality incurred by the solution of the robust BIOP. We present numerical experiments in Section~\ref{sec:experiments} for validating our results and we conclude the paper in Section~\ref{sec:conclusions}. \preprintswitch{Proofs are provided in the appendices of the accompanying Arxiv report \cite{furieri2021behavioral}.}{} 
\subsection{Notation}
We use $\mathbb{R}$ and  $\mathbb{N}$ to denote real numbers and non-negative integers, respectively. We use $I_n$ to denote the identity matrix of size $n \times n$ and  $0_{m \times n}$ to denote the zero matrix of size $m \times n$. We write $M=\text{blkdg}(M_1,\ldots,M_N)$ to denote a block-diagonal matrix with $M_1,\ldots,M_N \in \mathbb{R}^{m \times n}$ on its diagonal block entries, and for $\mathbf{M} =  \begin{bmatrix}M_1^\mathsf{T}&\cdots&M_N^\mathsf{T}\end{bmatrix}^\mathsf{T}$ we define the block-Toeplitz matrix $$\text{Toep}_{m \times n}\left(\mathbf{M}\right) \hspace{-0.075cm} = \hspace{-0.075cm} \begin{bmatrix}
           M_1&0_{m \times n}&\cdots&0_{m \times n}\\ M_2&M_1&\cdots &0_{m \times n}\\
           \vdots&\vdots&\ddots&\vdots\\
           M_N&M_{N-1}&\cdots&M_1
    \end{bmatrix}
    .$$
More concisely, we will write $\text{Toep}(\cdot)$ when the dimensions of the blocks are clear from the context. The Kronecker product between $M \in \mathbb{R}^{m \times n}$ and $P \in \mathbb{R}^{p \times q}$ is denoted as $M \otimes P \in \mathbb{R}^{mp \times nq}$. Given $K \in \mathbb{R}^{m \times n}$,  $\text{vec}(K) \in \mathbb{R}^{mn}$ is a column vector that stacks the columns of $K$. The Euclidean norm of a vector $v \in \mathbb{R}^n$ is denoted by $\norm{v}_2^2=v^\mathsf{T}v$ and the induced two-norm of a matrix $M \in \mathbb{R}^{m \times n}$ is defined as $\sup_{\norm{x}_2 = 1} \norm{Mx}_2$. The Frobenius norm of a matrix $M \in \mathbb{R}^{m \times n}$ is denoted by $\norm{M}_{F}=\sqrt{\text{Trace}(M^\mathsf{T}M)}$. For a symmetric matrix $M$, 
we write $M \succ 0$ (resp. $M \succeq 0$) if and only if it is positive definite (resp. positive semidefinite). We say that $x\sim \mathcal{N}(\mu,\Sigma)$ if the random variable  $x\in \mathbb{R}^n$ is distributed according to a normal distribution with mean $\mu \in \mathbb{R}^n$ and covariance matrix $\Sigma \succeq 0$ with $\Sigma \in \mathbb{R}^{n \times n}$.

A \emph{finite-horizon} trajectory of length $T$ is a sequence $\omega(0),\omega(1),\cdots, \omega(T-1)$ with $\omega(t) \in \mathbb{R}^n$ for every $t=0,1,\ldots, T-1$, which can be compactly written as
\begin{equation*}
    \bm{\omega}_{[0,T-1]} = \begin{bmatrix}\omega^\mathsf{T}(0)&\omega^\mathsf{T}(1)& \ldots&\omega^\mathsf{T}(T-1)\end{bmatrix}^\mathsf{T} \in \mathbb{R}^{nT}\,.
\end{equation*}
When the value of $T$ is clear from the context, we will omit the subscript $[0,T-1]$. For a  finite-horizon trajectory $\bm{\omega}_{[0,T-1]}$ we also define the Hankel matrix of depth $L$ as
\begin{equation*}
    \mathcal{H}_{L}(\bm{\omega}_{[0,T-1]}) = \begin{bmatrix}
           \omega(0)&\omega (1)&\cdots&\omega(T-L)\\ \omega(1)&\omega(2)&\cdots &\omega(T-L+1)\\
           \vdots&\vdots&\ddots&\vdots\\
           \omega(L-1)&\omega(L)&\cdots&\omega(T-1)
    \end{bmatrix}\,.
\end{equation*}

\section{Problem Statement}
\label{sec:prob_statement}
We consider a linear system with output observations, whose state-space  representation is given by
\begin{equation} \label{eq:dynamic}
	\begin{aligned}
	    x(t+1) = A x(t)+B u(t),~~y(t) = C x(t) + v(t)\,,
	\end{aligned}    
\end{equation}
where $x(t) \in \mathbb{R}^n$ is the state of the system and $x(0) =x_0$ for a predefined $x_0 \in \mathbb{R}^n$, $u(t)\in \mathbb{R}^m$ is the control input, $y(t)\in \mathbb{R}^p$ is the observed output, and $v(t)\in \mathbb{R}^p$ denotes Gaussian measurement noise $v(t) \sim \mathcal{N}(0,\Sigma_v)$, with $\Sigma_v \succ 0$. The system is controlled through a time-varying, dynamic linear control policy of the form
\begin{equation}
\label{eq:input}
	    u(t) = \sum_{k=0}^t K_{t,k} y(k)+w(t)\,,
	    \end{equation}
where  $w(t)\in \mathbb{R}^m$ denotes Gaussian noise on the input  $w(t) \sim \mathcal{N}(0,\Sigma_w)$ with $\Sigma_w \succeq 0$. 
Similar to standard LQG, our control goal is to synthesize a feedback control policy that minimizes the expected value with respect to the disturbances of a quadratic objective defined over future input-output trajectories for a horizon $N \in \mathbb{N}$:
    \begin{equation}
\label{eq:cost_output}
J^2:=\mathbb{E}_{w,v}\left[\sum_{t=0}^{N-1}\left(y(t)^\mathsf{T}L_ty(t)+u(t)^\mathsf{T}R_tu(t)\right)\right]\,,
\end{equation}	
where $L_t\succeq0$, $R_t\succ 0$ for every $t = 0,\cdots,N-1$. We note that, with Gaussian noise, dynamic linear policies are optimal for the cost defined in \eqref{eq:cost_output}. 
\begin{remark}
The reader might have noticed that the problem of minimizing \eqref{eq:cost_output} for a system in the form \eqref{eq:dynamic}-\eqref{eq:input} is slightly different from some of the classical LQG formulations, see for instance  \cite{bertsekas2011dynamic}. Specifically, in \eqref{eq:cost_output} we penalize the \emph{outputs} instead of the states, and the input noise $w(t)$ enters the state equation indirectly through the matrix $B$. This choice is motivated as follows:
\begin{enumerate}
    \item For all practical purposes, the cost function must be defined by the user. In a data-driven setup where only input-output samples can be measured, the user has to evaluate the cost solely relying on input-output trajectories. Furthermore, to define the cost, it is natural to specify the variance of the noise affecting inputs and outputs; instead it would be less meaningful to specify the statistics of the noise entering the states, as these would be representation dependent (i.e. only specified up to a change of variables $z(t) = S x(t)$, where $S$ is unknown because we do not have access to $x(t)$ by assumption).
    \item For zero initial state, the system \eqref{eq:dynamic} is equivalent to a classical transfer function representation as per Figure~\ref{Fig:LTI}. The considered noise model is indeed the standard choice in  closed-loop plant $\mathcal{H}_2$ norm minimization, see for instance \cite{zhou1996robust}. 
\end{enumerate}
\end{remark}

\begin{remark}
In this work, we focus on solving and analyzing a  finite-horizon control problem, which represents one iteration of a receding-horizon Model Predictive Control (MPC) implementation scheme. It is therefore appropriate to compare the proposed approach with a \emph{single} iteration of the DeePC setup in \cite{coulson2019data,berberich2020data}. The main difference is that we perform \emph{closed-loop predictions}, i.e., we optimize over feedback policies $\pi(\cdot)$ such that $u(t)=\pi(y(t),\ldots,y(0))$, while the DeePC \cite{coulson2019data,berberich2020data} performs \emph{open-loop predictions}, i.e., it directly optimizes over input sequences  $u(0),u(1),u(N-1)$. For linear systems subject to polytopic safety constraints, it is well-known that closed-loop predictions are less conservative than open-loop ones and allow for longer prediction horizons without incurring in infeasibility \cite{bemporad1998reducing}. The price to pay for such performance improvement is an increased computational burden due to the larger dimensionality of the problem.  
\end{remark}

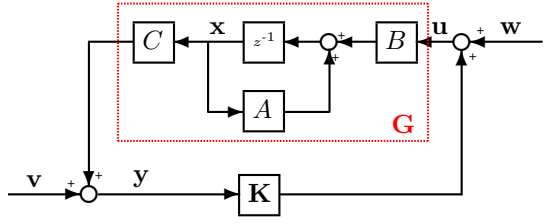
\begin{figure}[t]
  \centering
  \begin{center}\setlength{\unitlength}{0.008in}
\begin{picture}(243,170)(140,410)
\thicklines

\put(147,440){\vector(1, 0){ 93}}
\put(141,540){\vector( 0, -1){ 95}}
\put(88,440){\vector( 1, 0){ 48}}
\put(141,440){\circle{10}}
\put(170,445){\makebox(0,0)[lb]{$\mathbf{y}$}}
\put(100,445){\makebox(0,0)[lb]{$\mathbf{v}$}}
\put(145,450){\makebox(0,0)[lb]{\tiny +}}
\put(127,445){\makebox(0,0)[lb]{\tiny +}}


\put(385,540){\circle{10}}
\put(380,540){\vector( -1, 0){ 25}}
\put(385,439.5){\vector( 0,1){ 95.5}}
\put(438,540){\vector(-1, 0){ 48}}
\put(410,545){\makebox(0,0)[lb]{$\mathbf{w}$}}
\put(365,545){\makebox(0,0)[lb]{$\mathbf{u}$}}
\put(220,545){\makebox(0,0)[lb]{$\mathbf{x}$}}
\put(395,545){\makebox(0,0)[lb]{\tiny +}}
\put(390,528){\makebox(0,0)[lb]{\tiny +}}
\put(266,440){\line(1, 0){ 119.5}}

\put(240,427){\framebox(25,25){}} 
\put(245,435){\makebox(0,0)[lb]{$\mathbf{K}$}}

\put(330,527){\framebox(25,25){}} 
\put(335,535){\makebox(0,0)[lb]{$B$}}

\put(304,542){\makebox(0,0)[lb]{\tiny +}}
\put(300,527.5){\makebox(0,0)[lb]{\tiny +}}

\put(329,540){\vector(-1, 0){26}}
\put(298,540){\circle{10}}
\put(293,540){\vector(-1, 0){25}}

\put(243,527){\framebox(25,25){}} 
\put(248,537){\makebox(0,0)[lb]{\tiny  $z^{\text{-}1}$}}

\put(242,540){\vector(-1, 0){46}}
\put(219,540){\line(0, -1){46}}
\put(218.5,494){\vector(1, 0){24}}

\put(243,482){\framebox(25,25){}} 
\put(248,490){\makebox(0,0)[lb]{$A$}}

\put(268.5,494){\line(1, 0){29.7}}
\put(298.5,493.5){\vector(0, 1){42}}

\put(171,527){\framebox(25,25){}} 
\put(176,535){\makebox(0,0)[lb]{$C$}}
\put(170.5,540){\line(-1, 0){30}}


{\color{red}
\put(160,475){\dashbox(203,90){}} 
\put(339,480){\makebox(0,0)[lb]{$\mathbf{G}$}}
}

\end{picture}\end{center}
  \caption{Interconnection of the plant $\mathbf{G}$ and the controller $\mathbf{K}$, where $z^{-1}$ denotes the standard time-shift operator.}\label{Fig:LTI}
\end{figure}

\subsection{Strongly convex design through the IOP}
By leveraging tools offered by the framework of the IOP \cite{furieri2019input},  we formulate a strongly convex program that computes the optimal feedback control policy by finding the optimal input-output closed-loop responses. The state-space equations \eqref{eq:dynamic} provide the following relations between trajectories
	\begin{align}
	&\mathbf{x}_{[0,N-1]}=\mathbf{P}_{A}(:,0)x(0)+\mathbf{P}_{B}\mathbf{u}_{[0,N-1]}\,,\label{eq:state_compact}\\ &\mathbf{y}_{[0,N-1]}=\mathbf{C}\mathbf{x}_{[0,N-1]}+\mathbf{v}_{[0,N-1]}\,,  \label{eq:output_compact}
	\end{align}
	where  $\mathbf{P}_A(:,0)$ denotes the first block-column of $\mathbf{P}_A$ and
	\begin{alignat*}{3}
	&\mathbf{P}_{A}=(I-\mathbf{Z}\mathbf{A})^{-1}\,, &&\quad  \mathbf{P}_{B}=(I-\mathbf{Z}\mathbf{A})^{-1}\mathbf{Z}\mathbf{B}\,,\\
	& \mathbf{A}=I_{N} \otimes A\,, && \quad \mathbf{B}=I_{N}\otimes B\,,\\
& \mathbf{C} = I_{N} \otimes C\,, && \quad \mathbf{Z}=\begin{bmatrix}
0_{n \times n(N-1)}&0_{n \times n}\\
I_{n(N-1)}&0_{n(N-1) \times n}
\end{bmatrix}\,.
	\end{alignat*}
We note that $\mathbf{CP}_B$ is a Toeplitz matrix with blocks in the form $CA^iB$. From now on, we equivalently denote $\mathbf{G} = \mathbf{CP}_B$ to highlight that $\mathbf{G}$ is a block-Toeplitz matrix containing the first $N$ components of the impulse response of the plant $\mathbf{G}(z) = C(zI-A)^{-1}B$ reported in Figure~\ref{Fig:LTI}.  Second, with similar reasoning, the matrix $\mathbf{CP}_{A}(:,0)$ contains the observability terms $CA^i$ for $i=0,\ldots,N-1$.
The control policy can be rewritten as:
	\begin{equation}
	\label{eq:control_policy}
    \mathbf{u}_{[0,N-1]} = \mathbf{K}\mathbf{y}_{[0,N-1]}+\mathbf{w}_{[0,N-1]}\,,
\end{equation}
where $\mathbf{K}$ has a causal sparsity pattern:
\begin{equation}
\label{eq:K_sparsity}
\mathbf{K}=\begin{bmatrix}
	K_{0,0}&0_{m \times p}&\cdots&0_{m \times p}\\
	K_{1,0}&K_{1,1}&\ddots&0_{m \times p}\\
	\vdots&\vdots&\ddots&\vdots\\
	K_{N-1,0}&K_{N-1,1}&\cdots&K_{N-1,N-1}
	\end{bmatrix}\,.
\end{equation}
By plugging the controller \eqref{eq:control_policy} into \eqref{eq:state_compact}-\eqref{eq:output_compact}, it is easy to derive the relationships
\begin{align}
    &\begin{bmatrix}\mathbf{y}\\ \mathbf{u}\end{bmatrix}  =\begin{bmatrix}
           \bm{\Phi}_{yy} & \bm{\Phi}_{yu} \\
            \bm{\Phi}_{uy} & \bm{\Phi}_{uu} 
        \end{bmatrix}\begin{bmatrix} \mathbf{v}+\mathbf{CP}_{A}(:,0)x(0)\\ \mathbf{w}\end{bmatrix}\,, \label{eq:IOP_parameters}
\end{align}
where
\begin{equation}
   \begin{bmatrix}
           \bm{\Phi}_{yy} & \bm{\Phi}_{yu} \\
            \bm{\Phi}_{uy} & \bm{\Phi}_{uu} 
        \end{bmatrix} = \begin{bmatrix}(I-\mathbf{GK})^{-1} & (I-\mathbf{GK})^{-1}\mathbf{G}\\ \mathbf{K}(I-\mathbf{GK})^{-1} & (I-\mathbf{KG})^{-1}\end{bmatrix}\label{eq:CL_responses}\,.
\end{equation}
The parameters ($\bm{\Phi}_{yy}, \bm{\Phi}_{yu}, \bm{\Phi}_{uy}, \bm{\Phi}_{uu}$)  represent the four closed-loop responses defining the relationship between disturbances and input-output signals.  The main concept behind the IOP in \cite{furieri2019input} is that linear output-feedback control policies $\mathbf{K}$ can be expressed in terms of corresponding closed-loop responses that lie in an affine subspace, hence enabling a convex formulation of the objective  $J(\mathbf{G},\mathbf{K})$ given in \eqref{eq:cost_output} as a function of the closed-loop responses. The IOP serves well our purposes in a data-driven output-feedback setup, as it offers a controller parametrization that is directly defined through the impulse response parameters $\mathbf{G}$, without requiring a state-space representation. We adapt the following result from \cite{furieri2019input} to the finite horizon case. A proof is reported in the Appendix \preprintswitch{of \cite{furieri2021behavioral}}{for completeness}.
\begin{proposition}
\label{prop:IOP}
Consider the LTI system \eqref{eq:dynamic} evolving under the control policy \eqref{eq:control_policy} within a finite horizon of length $N \in \mathbb{N}$. Then:
\begin{enumerate}
    \item For any controller $\mathbf{K}$ there exist four matrices ($\bm{\Phi}_{yy}, \bm{\Phi}_{yu}, \bm{\Phi}_{uy}, \bm{\Phi}_{uu}$) such that $\mathbf{K} = \bm{\Phi}_{uy}\bm{\Phi}_{yy}^{-1}$ and
    \begin{align}
        \begin{bmatrix} I & -\mathbf{G} \end{bmatrix}\begin{bmatrix}
               \bm{\Phi}_{yy} & \bm{\Phi}_{yu} \\
                \bm{\Phi}_{uy} & \bm{\Phi}_{uu} 
            \end{bmatrix} &= \begin{bmatrix} I & 0 \end{bmatrix}, \label{eq:ach1}\\
                \begin{bmatrix}
                \bm{\Phi}_{yy} & \bm{\Phi}_{yu} \\
                \bm{\Phi}_{uy} & \bm{\Phi}_{uu} 
            \end{bmatrix}\begin{bmatrix}  -\mathbf{G} \\I \end{bmatrix} &= \begin{bmatrix} 0 \\ I\end{bmatrix},\label{eq:ach2} \\
           \bm{\Phi}_{yy}, \bm{\Phi}_{uy}, 
                \bm{\Phi}_{yu}, \bm{\Phi}_{uu} 
                \text{ have} &\text{ causal sparsities \footnotemark}.\label{eq:ach3}
    \end{align}
    \footnotetext{Specifically, they have the block lower-triangular sparsities resulting by construction from the expressions \eqref{eq:CL_responses}, the sparsity of $\mathbf{K}$ in \eqref{eq:K_sparsity} and that  of $\mathbf{G}$.}
    \item For any four matrices ($\bm{\Phi}_{yy}, \bm{\Phi}_{yu}, \bm{\Phi}_{uy}, \bm{\Phi}_{uu}$) lying in the affine subspace \eqref{eq:ach1}-\eqref{eq:ach3}, the controller $\mathbf{K}=\bm{\Phi}_{uy}\bm{\Phi}_{yy}^{-1}$ is causal as per \eqref{eq:K_sparsity} and yields the closed-loop responses  ($\bm{\Phi}_{yy}, \bm{\Phi}_{yu}, \bm{\Phi}_{uy}, \bm{\Phi}_{uu}$).
\end{enumerate}
\end{proposition}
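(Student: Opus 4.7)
My plan is to treat the two directions separately, exploiting throughout the key structural fact that the finite-horizon impulse-response matrix $\mathbf{G}=\mathbf{CP}_B$ is \emph{strictly} block lower triangular, since the $(i,j)$ block equals $CA^{i-j-1}B$ for $i>j$ and is zero otherwise. Hence any product of $\mathbf{G}$ with a block lower triangular matrix is again strictly block lower triangular, so matrices of the form $I + \mathbf{G}M$ or $I + M\mathbf{G}$ (with $M$ block lower triangular) are block lower triangular with identity on their block diagonals and therefore invertible, with inverses that share this structure. This observation underpins both directions and takes the place of the RH$_\infty$ arguments used in the infinite-horizon IOP of \cite{furieri2019input}.

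For the first (\emph{necessity}) direction, given a causal $\mathbf{K}$ with sparsity \eqref{eq:K_sparsity}, I would simply \emph{define} the four responses by the closed-form expressions in \eqref{eq:CL_responses}. The matrix $(I-\mathbf{GK})$ is invertible by the argument above, so the definition makes sense. Substituting these expressions into the two affine identities yields, for \eqref{eq:ach1}, $\bm{\Phi}_{yy}-\mathbf{G}\bm{\Phi}_{uy}=(I-\mathbf{GK})^{-1}(I-\mathbf{GK})=I$ and $\bm{\Phi}_{yu}-\mathbf{G}\bm{\Phi}_{uu}=(I-\mathbf{GK})^{-1}\mathbf{G}-\mathbf{G}(I-\mathbf{KG})^{-1}=0$, using the standard push-through identity $(I-\mathbf{GK})^{-1}\mathbf{G}=\mathbf{G}(I-\mathbf{KG})^{-1}$; \eqref{eq:ach2} is obtained analogously. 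The relation $\mathbf{K}=\bm{\Phi}_{uy}\bm{\Phi}_{yy}^{-1}$ is immediate from the definitions, and the causal sparsity \eqref{eq:ach3} follows from the structural observation above applied to the Neumann-style expansions of $(I-\mathbf{GK})^{-1}$ and $(I-\mathbf{KG})^{-1}$.

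For the second (\emph{sufficiency}) direction, assume that $(\bm{\Phi}_{yy},\bm{\Phi}_{yu},\bm{\Phi}_{uy},\bm{\Phi}_{uu})$ satisfy \eqref{eq:ach1}--\eqref{eq:ach3}. From \eqref{eq:ach1} I read off $\bm{\Phi}_{yy}=I+\mathbf{G}\bm{\Phi}_{uy}$ and $\bm{\Phi}_{yu}=\mathbf{G}\bm{\Phi}_{uu}$, and from \eqref{eq:ach2} I read off $\bm{\Phi}_{uu}=I+\bm{\Phi}_{uy}\mathbf{G}$. Since $\bm{\Phi}_{uy}$ is block lower triangular and $\mathbf{G}$ is strictly block lower triangular, $\bm{\Phi}_{yy}$ and $\bm{\Phi}_{uu}$ are block lower triangular with identity blocks on the diagonal, hence invertible, and their inverses preserve this structure. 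Consequently $\mathbf{K}=\bm{\Phi}_{uy}\bm{\Phi}_{yy}^{-1}$ is block lower triangular, i.e., has the causal sparsity in \eqref{eq:K_sparsity}. Finally, I verify that this $\mathbf{K}$ produces the claimed closed-loop responses by the key computation $\mathbf{G}\mathbf{K}=\mathbf{G}\bm{\Phi}_{uy}\bm{\Phi}_{yy}^{-1}=(\bm{\Phi}_{yy}-I)\bm{\Phi}_{yy}^{-1}=I-\bm{\Phi}_{yy}^{-1}$, which gives $(I-\mathbf{GK})^{-1}=\bm{\Phi}_{yy}$; the remaining three identifications follow by substituting this into the other relations derived from \eqref{eq:ach1}--\eqref{eq:ach2}.

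The main obstacle I anticipate is not any single algebraic step but rather making sure that every inverse appearing in the argument is well defined in the finite-horizon setting, and that the causal sparsity pattern inherited by $\bm{\Phi}_{yy}^{-1}$ (and by the controller $\mathbf{K}$) is handled rigorously; both issues reduce to the block-lower-triangular-with-identity-diagonal structure established at the outset, so once that lemma-like observation is recorded cleanly, the rest of the proof is a direct verification.
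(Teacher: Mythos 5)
Your proof is correct and follows essentially the same route as the paper's: for necessity you define the responses by \eqref{eq:CL_responses} and verify \eqref{eq:ach1}--\eqref{eq:ach3}, and for sufficiency you use $\bm{\Phi}_{yy}=I+\mathbf{G}\bm{\Phi}_{uy}$ from \eqref{eq:ach1} to conclude $(I-\mathbf{G}\mathbf{K})^{-1}=\bm{\Phi}_{yy}$, exactly as in the appendix. Your explicit justification of all the inverses via the strictly block-lower-triangular (hence nilpotent) structure of $\mathbf{G}$ is a welcome detail that the paper leaves implicit.
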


We are now ready to establish a strongly convex formulation of the optimal control problem under study. Please refer to the \preprintswitch{Appendix of \cite{furieri2021behavioral}}{Appendix} for a complete proof.
\begin{proposition}
\label{prop:strongly_convex}
Consider the LTI system \eqref{eq:dynamic}. The controller in the form \eqref{eq:control_policy} achieving the minimum of the cost functional \eqref{eq:cost_output} is given by $\mathbf{K} = \bm{\Phi}_{uy} \bm{\Phi}_{yy}^{-1}$, where $\bm{\Phi}_{uy},\bm{\Phi}_{yy}$ are optimal solutions to the following strongly convex program:
\begin{align}
    &~\min_{\bm{\Phi}}\frobenius{
    \hspace{-0.025cm}
    \begin{bmatrix}\mathbf{L}^{\frac{1}{2}}&0\\0&\mathbf{R}^{\frac{1}{2}}\end{bmatrix}
    \hspace{-0.1cm}
    \begin{bmatrix}
            \bm{\Phi}_{yy} & \bm{\Phi}_{yu} \\
            \bm{\Phi}_{uy} & \bm{\Phi}_{uu} 
        \end{bmatrix}
        \hspace{-0.1cm}
        \begin{bmatrix}\bm{\Sigma}^{\frac{1}{2}}_v&0&\mathbf{y}_{x(0)}\\0&\bm{\Sigma}^{\frac{1}{2}}_w&0\end{bmatrix}
        \hspace{-0.025cm}
        }^2\label{prob:IOP}\\
    &\st~ \eqref{eq:ach1}-\eqref{eq:ach3}\,, \nonumber
\end{align}
where $\mathbf{y}_{x(0)}=\mathbf{CP}_{A}(:,0)x(0)$, $\mathbf{L} = \text{\emph{blkdiag}}(L_0,\cdots, L_{N-1})$, $\mathbf{R} = \text{\emph{blkdiag}}(R_0,\cdots, R_{N-1})$, $\bm{\Sigma}_v = I_{N} \otimes \Sigma_v$ and $\bm{\Sigma}_w = I_{N} \otimes \Sigma_w$.
\end{proposition}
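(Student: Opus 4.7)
The plan is to first recast the problem as an optimization over the four closed-loop response matrices via Proposition~\ref{prop:IOP}, then expand the quadratic cost using the IOP parametrization \eqref{eq:IOP_parameters} together with the standard Gaussian trace identity, next repackage the resulting expression as a Frobenius-norm squared matching \eqref{prob:IOP}, and finally read off strong convexity from the positive definiteness of $\mathbf{R}$ and $\bm{\Sigma}_v$.

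For the first step, Proposition~\ref{prop:IOP} already shows that any causal controller of the form \eqref{eq:K_sparsity} corresponds one-to-one to a tuple $(\bm{\Phi}_{yy},\bm{\Phi}_{yu},\bm{\Phi}_{uy},\bm{\Phi}_{uu})$ in the affine subspace \eqref{eq:ach1}--\eqref{eq:ach3} via $\mathbf{K}=\bm{\Phi}_{uy}\bm{\Phi}_{yy}^{-1}$. Note that $\mathbf{G}$ is strictly block lower-triangular (its $(i,j)$-block equals $CA^{i-j-1}B$ for $i>j$ and $0$ otherwise); hence \eqref{eq:ach1} gives $\bm{\Phi}_{yy}=I+\mathbf{G}\bm{\Phi}_{uy}$ with identity diagonal blocks, so $\bm{\Phi}_{yy}$ is always invertible. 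It therefore suffices to minimize \eqref{eq:cost_output} over all tuples lying in \eqref{eq:ach1}--\eqref{eq:ach3}.

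Substituting \eqref{eq:IOP_parameters} into \eqref{eq:cost_output} makes $\mathbf{y}$ and $\mathbf{u}$ affine functions of the independent zero-mean Gaussians $\mathbf{v},\mathbf{w}$ with a deterministic offset $\mathbf{y}_{x(0)}=\mathbf{CP}_A(:,0)x(0)$. Applying $\mathbb{E}[z^\mathsf{T}Qz]=\Tr(Q\,\mathbb{E}[zz^\mathsf{T}])$ separately to $\mathbb{E}[\mathbf{y}^\mathsf{T}\mathbf{L}\mathbf{y}]$ and $\mathbb{E}[\mathbf{u}^\mathsf{T}\mathbf{R}\mathbf{u}]$ and collecting terms yields
\begin{align*}
J^{2}&=\Tr\!\left[\left(\bm{\Sigma}_{v}+\mathbf{y}_{x(0)}\mathbf{y}_{x(0)}^{\mathsf{T}}\right)\!\left(\bm{\Phi}_{yy}^{\mathsf{T}}\mathbf{L}\bm{\Phi}_{yy}+\bm{\Phi}_{uy}^{\mathsf{T}}\mathbf{R}\bm{\Phi}_{uy}\right)\right]\\
&\quad+\Tr\!\left[\bm{\Sigma}_{w}\!\left(\bm{\Phi}_{yu}^{\mathsf{T}}\mathbf{L}\bm{\Phi}_{yu}+\bm{\Phi}_{uu}^{\mathsf{T}}\mathbf{R}\bm{\Phi}_{uu}\right)\right].
\end{align*}
Letting $A$ and $B$ denote the left and right weight matrices in the Frobenius norm of \eqref{prob:IOP}, a direct computation gives $A^{\mathsf{T}}A=\text{blkdiag}(\mathbf{L},\mathbf{R})$ and $BB^{\mathsf{T}}=\text{blkdiag}(\bm{\Sigma}_{v}+\mathbf{y}_{x(0)}\mathbf{y}_{x(0)}^{\mathsf{T}},\,\bm{\Sigma}_{w})$, so that the cyclic-trace identity $\frobenius{A\bm{\Phi}B}^{2}=\Tr(BB^{\mathsf{T}}\bm{\Phi}^{\mathsf{T}}A^{\mathsf{T}}A\bm{\Phi})$ reproduces the displayed $J^{2}$ exactly.

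Finally, \eqref{eq:ach1}--\eqref{eq:ach2} express $\bm{\Phi}_{yy},\bm{\Phi}_{yu},\bm{\Phi}_{uu}$ as affine functions of the single free variable $\bm{\Phi}_{uy}$; the summand $\Tr(\mathbf{R}\bm{\Phi}_{uy}\bm{\Sigma}_{v}\bm{\Phi}_{uy}^{\mathsf{T}})$ has Hessian $\bm{\Sigma}_{v}\otimes\mathbf{R}\succ 0$ with respect to $\mathrm{vec}(\bm{\Phi}_{uy})$ under the assumptions $\mathbf{R}\succ 0$ and $\bm{\Sigma}_{v}\succ 0$, so the cost is strongly convex on the affine constraint set. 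The main obstacle I expect is the bookkeeping needed to match the expanded trace with the compact Frobenius form -- in particular recognizing that the deterministic $\mathbf{y}_{x(0)}$ can be absorbed as an extra column of $B$ via the rank-one contribution $\mathbf{y}_{x(0)}\mathbf{y}_{x(0)}^{\mathsf{T}}$ to $BB^{\mathsf{T}}$, rather than appearing as a separate additive summand. The remainder is a routine application of Proposition~\ref{prop:IOP}, Gaussian expectation identities, and the positivity hypotheses.
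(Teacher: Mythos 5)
Your proof is correct and follows essentially the same route as the paper's: expand the expected quadratic cost via the Gaussian second-moment identity, fold the deterministic offset $\mathbf{y}_{x(0)}$ into the right-hand weight matrix so that the whole expression collapses to the single Frobenius norm in \eqref{prob:IOP}, and extract strong convexity from the $\Tr(\bm{\Sigma}_v\bm{\Phi}_{uy}^{\mathsf{T}}\mathbf{R}\bm{\Phi}_{uy})$ term. Your treatment is in fact slightly more careful than the paper's on the last point, since you explicitly use \eqref{eq:ach1}--\eqref{eq:ach2} to reduce the feasible set to an affine parametrization by $\bm{\Phi}_{uy}$ before invoking the positive-definite Hessian $\bm{\Sigma}_v\otimes\mathbf{R}$, which is what actually justifies strong convexity on the constraint set.
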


When the system parameters $(A,B,C,x_0)$ are known, it is straightforward and efficient to compute the unique globally optimal solution $(\bm{\Phi}^\star_{yy}, \bm{\Phi}^\star_{yu}, \bm{\Phi}^\star_{uy}, \bm{\Phi}^\star_{uu}$) of problem~\eqref{prob:IOP} with off-the-shelf interior point solvers. The globally optimal control policy is recovered as $\mathbf{K}^\star = \bm{\Phi}_{uy}^\star (\bm{\Phi}^\star_{yy})^{-1}$.  We also remark that, since the noise is Gaussian, the linear policy $\mathbf{u}=\pi^\star(\mathbf{y}) = \mathbf{K}^\star \mathbf{y}$ is optimal with respect to all feedback policies. If the noise is non-Gaussian, $\mathbf{K}^\star$ remains the optimal linear controller, but nonlinear policies may outperform it.

However, it is more challenging to compute $\mathbf{K}^\star$ merely relying on libraries of past input-output trajectories. In the next section, we exploit behavioral theory to provide a non-parametric data-driven version of \eqref{prob:IOP}.

\section{Behavioral Input-Output Parametrization}
\label{sec:BIOP_nonnoisy}

Before moving on, we  recall the following definition of persistency of excitation and the result known as the \emph{Fundamental Lemma} for LTI systems \cite{willems2005note}.
\begin{definition}
We say that $\mathbf{u}^h_{[0,T-1]}$ is \emph{persistently exciting} (PE) of order $L$ if the Hankel matrix $\mathcal{H}_L(\mathbf{u}^h_{[0,T-1]})$ has full row-rank.
\end{definition}
A necessary condition for the matrix $\mathcal{H}_{L}(\mathbf{u}^h_{[0,T-1]})$ to be full row-rank is that it has at least as many columns as rows. It follows that the trajectory $\mathbf{u}^h_{[0,T-1]}$  must be long enough to satisfy $T\geq (m+1)L-1$.
\begin{lemma}[Theorem 3.7, \cite{willems2005note}]
\label{le:Willems}
Consider system \eqref{eq:dynamic} and assume that $(A,B)$ is controllable and that there is no noise. Let $\{\mathbf{y}^h_{[0,T-1]},\mathbf{u}^h_{[0,T-1]}\}$ be a historical system trajectory of length $T$. Then, if $\mathbf{u}_{[0,T-1]}$ is PE of order $n+L$, the signals $\mathbf{y}^\star_{[0,L-1]}\in \mathbb{R}^{pL}$ and $\mathbf{u}^\star_{[0,L-1]} \in \mathbb{R}^{mL}$ are valid trajectories of \eqref{eq:dynamic} if and only if there exists $g \in \mathbb{R}^{T-L+1}$ such that
\begin{equation}
    \begin{bmatrix}
           \mathcal{H}_{L}(\mathbf{y}^h_{[0,T-1]})\\
           \mathcal{H}_{L}(\mathbf{u}^h_{[0,T-1]})
    \end{bmatrix} g = \begin{bmatrix}\mathbf{y}^\star_{[0,L-1]}\\\mathbf{u}^\star_{[0,L-1]}\end{bmatrix}\,. \label{eq:Willems}
\end{equation}
\end{lemma}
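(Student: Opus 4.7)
The plan is to adapt the standard proof of Willems' fundamental lemma, which naturally splits into an easy linearity argument and a rank argument on an augmented state/input Hankel structure tied together by the state-space realization.

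First I would dispatch the sufficiency direction ($\Leftarrow$). If $g\in\mathbb{R}^{T-L+1}$ satisfies \eqref{eq:Willems}, then $(\mathbf{y}^\star,\mathbf{u}^\star)$ is a linear combination of length-$L$ sub-trajectories drawn from the historical experiment: each column of the stacked Hankel matrix on the left-hand side of \eqref{eq:Willems} corresponds to a genuine length-$L$ input-output trajectory of \eqref{eq:dynamic} (starting from the state reached at the appropriate time during the experiment). By linearity and time-invariance of the noiseless dynamics, the linear combination with weights $g$ is itself a valid trajectory.

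For the necessity direction ($\Rightarrow$), I would parametrize trajectories through the state-space model. Let $\Gamma_L = \begin{bmatrix}C^\mathsf{T}&(CA)^\mathsf{T}&\cdots&(CA^{L-1})^\mathsf{T}\end{bmatrix}^\mathsf{T}$ be the observability matrix over the horizon $L$ and let $\Phi_L$ be the block lower-triangular Toeplitz matrix built from the Markov parameters $CA^iB$. Any valid trajectory $(\mathbf{y}^\star,\mathbf{u}^\star)$ of length $L$ then satisfies $\mathbf{y}^\star=\Gamma_L x_0^\star+\Phi_L \mathbf{u}^\star$ for some $x_0^\star\in\mathbb{R}^n$, and the historical Hankel matrices admit the factorization
\begin{equation*}
\begin{bmatrix}\mathcal{H}_L(\mathbf{y}^h)\\ \mathcal{H}_L(\mathbf{u}^h)\end{bmatrix}=\begin{bmatrix}\Gamma_L & \Phi_L\\ 0 & I_{mL}\end{bmatrix}\begin{bmatrix}X_0^h\\ \mathcal{H}_L(\mathbf{u}^h)\end{bmatrix},
\end{equation*}
where $X_0^h\in\mathbb{R}^{n\times(T-L+1)}$ collects the experiment states $x(0),x(1),\dots,x(T-L)$. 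The existence of $g$ solving \eqref{eq:Willems} then reduces to showing that every pair $(x_0^\star,\mathbf{u}^\star)\in\mathbb{R}^{n+mL}$ lies in the column span of $\begin{bmatrix}X_0^h\\ \mathcal{H}_L(\mathbf{u}^h)\end{bmatrix}$, i.e.\ that this matrix has full row rank $n+mL$.

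The main obstacle is proving this rank condition from the two hypotheses, controllability of $(A,B)$ and persistency of excitation of $\mathbf{u}^h$ of order $n+L$. I would argue by contradiction: suppose $\alpha\in\mathbb{R}^n$ and $\beta\in\mathbb{R}^{mL}$, not both zero, satisfy $\alpha^\mathsf{T}X_0^h+\beta^\mathsf{T}\mathcal{H}_L(\mathbf{u}^h)=0$. Iterating $x(t+1)=Ax(t)+Bu(t)$ expresses columns of $X_0^h$ at different times through lower block rows that are linear combinations of shifted inputs, which (after left-multiplying by $\alpha^\mathsf{T}$) produces a nontrivial row relation on the block-rows of $\mathcal{H}_{n+L}(\mathbf{u}^h)$. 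Controllability guarantees that this reduction is nondegenerate, so the relation contradicts PE of order $n+L$. Making this elimination step fully rigorous (in particular tracking how rows of $X_0^h$ couple with specific block-rows of the larger Hankel matrix via the extended controllability matrix $\begin{bmatrix}B&AB&\cdots&A^{n-1}B\end{bmatrix}$) is the technical heart of the argument; the remainder follows by inverting the block-triangular factor above and solving the resulting linear system for $g$.
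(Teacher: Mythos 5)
The paper does not prove this lemma at all: it is imported verbatim as Theorem~3.7 of \cite{willems2005note}, so there is no in-paper argument to compare against. Judged on its own terms, your outline follows the classical route of Willems et al.\ and gets the architecture right: the sufficiency direction via linearity and time-invariance of the noiseless dynamics is complete, and the reduction of necessity to the claim that $\bigl[\begin{smallmatrix}X_0^h\\ \mathcal{H}_L(\mathbf{u}^h)\end{smallmatrix}\bigr]$ has full row rank $n+mL$, via the factorization through $\Gamma_L$ and $\Phi_L$, is exactly the standard decomposition. (One small slip: the block-triangular factor is $(p+m)L\times(n+mL)$ and is not invertible in general; you do not need to invert it --- once $g$ solves $\bigl[\begin{smallmatrix}X_0^h\\ \mathcal{H}_L(\mathbf{u}^h)\end{smallmatrix}\bigr]g=\bigl[\begin{smallmatrix}x_0^\star\\ \mathbf{u}^\star\end{smallmatrix}\bigr]$, left-multiplying by that factor already yields \eqref{eq:Willems}.)

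The genuine gap is the rank lemma itself, which is the entire mathematical content of the fundamental lemma and which you explicitly defer. The contradiction sketch as written does not go through: from $\alpha^\mathsf{T}X_0^h+\beta^\mathsf{T}\mathcal{H}_L(\mathbf{u}^h)=0$ one cannot directly read off a left-annihilator of $\mathcal{H}_{n+L}(\mathbf{u}^h)$, because the state columns $x(j)$ are not linear combinations of the recorded inputs alone --- each carries an $A^j x(0)$ term, and the data contain no inputs prior to time $0$ with which to eliminate it. The actual argument in \cite{willems2005note} works with the deeper matrix $\bigl[\begin{smallmatrix}X_0^h\\ \mathcal{H}_{n+L}(\mathbf{u}^h)\end{smallmatrix}\bigr]$, uses the surjectivity of $\mathcal{H}_{n+L}(\mathbf{u}^h)$ to choose the $n$ extra block-rows of inputs freely, and invokes controllability to steer the state so as to force $\alpha^\mathsf{T}\begin{bmatrix}B&AB&\cdots&A^{n-1}B\end{bmatrix}=0$ and hence $\alpha=0$; only then does PE of order $L$ kill $\beta$. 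Without carrying out this construction (or an equivalent inductive peeling argument), ``controllability guarantees that this reduction is nondegenerate'' is an assertion, not a proof, and it is precisely the step where both hypotheses --- controllability and the extra $n$ in the excitation order --- are consumed.
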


Next, we  show how Lemma~\ref{le:Willems} can be directly exploited to obtain a non-parametric formulation of \eqref{prob:IOP}. We work under the following assumptions.

\begin{assumption}
\label{ass:1}
The data-generating LTI system \eqref{eq:dynamic} is such that $(A,B)$ is controllable and $(A,C)$ is observable.
\end{assumption}

\begin{assumption}
\label{ass:2}
The following data are available:
    \begin{enumerate}
    \item[i)] a \emph{recent} system trajectory of length $T_{ini}$: $\left\{\mathbf{y}^r_{[0,T_{ini}-1]},\mathbf{u}^r_{[0,T_{ini}-1]}\right\}$, with $\mathbf{y}^r_{[0,T_{ini}-1]} \hspace{-0.0175cm} = \hspace{-0.0175cm} \mathbf{y}_{[-T_{ini},-1]}$ and $\mathbf{u}^r_{[0,T_{ini}-1]} = \mathbf{u}_{[-T_{ini},-1]}$,
        \item[ii)] a \emph{historical} system trajectory of length $T$: $        \left\{\mathbf{y}^h_{[0,T-1]},\mathbf{u}^h_{[0,T-1]}\right\}$, with $\mathbf{y}^h_{[0,T-1]} = \mathbf{y}_{[-T_{h},-T_{h}+T-1]}$ and $\mathbf{u}^h_{[0,T-1]} = \mathbf{u}_{[-T_{h},-T_h+T-1]}$ for $T_h \in \mathbb{N}$ such that $T_h > T+T_{ini}$.
    \end{enumerate}
\end{assumption}

\begin{assumption}
\label{ass:3}
The historical and recent data are not corrupted by noise. 
\end{assumption}

We will drop Assumption~3 in Section~\ref{sec:BIOP_noisy}. 

\begin{assumption}
\label{ass:4}
The historical input trajectory $\mathbf{u}^h_{[0,T-1]}$ is persistently exciting of order $n+T_{ini}+N$, where $T_{ini}\geq l$ and $l$ is the smallest integer such that \begin{equation*}\begin{bmatrix}C^\mathsf{T}&(CA)^\mathsf{T}&\cdots &(CA^{l-1})^\mathsf{T}\end{bmatrix}^\mathsf{T}\,,\end{equation*} has full row-rank. Note that if Assumption~\ref{ass:1} holds, then $l \leq n$.
\end{assumption}

A few comments are in order. First, Assumption~\ref{ass:1} is without loss of generality, as from an input-output perspective we are not concerned with the non-controllable and non-observable subsystems. Therefore, it is equivalent to assume that $(A,B,C)$ are the matrices associated with the controllable and observable parts of the LTI system. Second, in Assumption~\ref{ass:2} the \emph{historical} data are needed to construct a non-parametric system representation, and the \emph{recent} data are exploited to define a cost function that accurately reflects the system initial state $x(0) \in \mathbb{R}^n$.  Third, in Assumption~\ref{ass:3} we assume that the observed data are noiseless to construct a data-driven optimal control problem  that is \emph{equivalent} to \eqref{prob:IOP}. We will deal with the noisy case in Section~\ref{sec:BIOP_noisy}.
\begin{theorem}[Behavioral IOP]
\label{th:BIOP}
Consider the unknown LTI system \eqref{eq:dynamic} and let Assumptions~\ref{ass:1}-\ref{ass:4} hold. Let $(G,g)$ be any solutions to the linear system of equations
\begin{equation}
\label{eq:BM}
    \begin{bmatrix}U_p\\ Y_p\\U_f\end{bmatrix}
    \hspace{-0.1cm}
    \begin{bmatrix} G&g \end{bmatrix}
    \hspace{-0.06cm}
    =
    \hspace{-0.06cm}
    \begin{bmatrix}0_{mT_{ini} \times m}&\mathbf{u}^r_{[0,T_{ini}-1]}\\0_{pT_{ini} \times m} & \mathbf{y}^r_{[0,T_{ini}-1]}\\\begin{bmatrix}I_m&0_{ m \times m(N-1)} \end{bmatrix}^\mathsf{T} &0_{mN \times 1}\end{bmatrix}\hspace{-0.1cm},
\end{equation}
where $\begin{bmatrix}U_p\\U_f\end{bmatrix} = \mathcal{H}_{T_{ini}+N}(\mathbf{u}^h_{[0,T-1]})$ and   $\begin{bmatrix}Y_p\\Y_f\end{bmatrix} = \mathcal{H}_{T_{ini}+N}(\mathbf{y}^h_{[0,T-1]})$. Then, the optimization problem~\eqref{prob:IOP} is equivalent to \begin{align}
    &\min_{\bm{\Phi}}\frobenius{\begin{bmatrix}\mathbf{L}^{\frac{1}{2}}&0\\0&\mathbf{R}^{\frac{1}{2}}\end{bmatrix}\begin{bmatrix}
            \bm{\Phi}_{yy} & \bm{\Phi}_{yu} \\
            \bm{\Phi}_{uy} & \bm{\Phi}_{uu} 
        \end{bmatrix}\begin{bmatrix}\bm{\Sigma}^{\frac{1}{2}}_v&0&Y_fg\\0&\bm{\Sigma}^{\frac{1}{2}}_w&0\end{bmatrix}}^2\label{prob:IOP_data}\\
    &\st \begin{bmatrix} I & -\text{\emph{Toep}}(Y_fG) \end{bmatrix}\begin{bmatrix}
           \bm{\Phi}_{yy} & \bm{\Phi}_{yu} \\
            \bm{\Phi}_{uy} & \bm{\Phi}_{uu} 
        \end{bmatrix} = \begin{bmatrix} I & 0 \end{bmatrix}\,, \nonumber\\
            &\qquad\quad\begin{bmatrix}
            \bm{\Phi}_{yy} & \bm{\Phi}_{yu} \\
            \bm{\Phi}_{uy} & \bm{\Phi}_{uu} 
        \end{bmatrix}\begin{bmatrix} -\text{\emph{Toep}}(Y_fG) \\I \end{bmatrix} = \begin{bmatrix} 0 \\ I\end{bmatrix} \,,\nonumber\\
       &\qquad\quad\bm{\Phi}_{yy}, \bm{\Phi}_{uy}, 
            \bm{\Phi}_{yu}, \bm{\Phi}_{uu}  \text{\emph{ with causal sparsities}}.\nonumber
\end{align}
\end{theorem}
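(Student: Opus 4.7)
My plan is to show that, under Assumptions~\ref{ass:1}--\ref{ass:4}, the quantities $\text{Toep}(Y_f G)$ and $Y_f g$ coincide, respectively, with the impulse-response matrix $\mathbf{G}$ appearing in the achievability constraints \eqref{eq:ach1}--\eqref{eq:ach2} and with the free-response vector $\mathbf{y}_{x(0)} = \mathbf{C}\mathbf{P}_{A}(:,0)x(0)$ entering the cost of \eqref{prob:IOP}. Once these two identifications are made, the problems \eqref{prob:IOP} and \eqref{prob:IOP_data} have identical feasible sets and identical objective functions, so their equivalence is immediate. The enabling tool throughout is the Fundamental Lemma: Assumption~\ref{ass:4} gives persistency of excitation of order $n+T_{ini}+N$, which together with Assumptions~\ref{ass:1} and~\ref{ass:3} implies, via Lemma~\ref{le:Willems}, that every valid length-$(T_{ini}+N)$ input-output trajectory of the unknown plant lies in the column span of the Hankel stack $\begin{bmatrix}U_p^\mathsf{T} & Y_p^\mathsf{T} & U_f^\mathsf{T} & Y_f^\mathsf{T}\end{bmatrix}^\mathsf{T}$.

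First, I would interpret the $i$-th column $G_i$ of $G$ as the vector of coefficients that synthesizes a system trajectory with zero past input, zero past output, and future input equal to a unit impulse $e_i \in \mathbb{R}^m$ at time $0$ followed by zeros. Since $(A,C)$ is observable and $T_{ini}\geq l$, the vanishing past forces the state at the start of the prediction window to be zero; the resulting future output is the impulse-response sequence $[0;\,CBe_i;\,CABe_i;\,\cdots;\,CA^{N-2}Be_i]$, which is exactly the $i$-th column of the first block-column of $\mathbf{G}$. Collecting columns yields $\text{Toep}(Y_f G) = \mathbf{G}$. Next, I would read $g$ as the coefficients synthesizing a trajectory whose past equals the recent data $(\mathbf{u}^r,\mathbf{y}^r)$ and whose future input is identically zero. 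By observability, this past uniquely identifies the state at the start of the prediction window as the true $x(0)$, so the corresponding future output is the zero-input response $\mathbf{C}\mathbf{P}_{A}(:,0)x(0) = \mathbf{y}_{x(0)}$. Existence of $G$ and $g$ follows in both cases by applying Lemma~\ref{le:Willems} to the two explicit valid trajectories just described.

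The main subtlety I anticipate is well-posedness: when $T$ is large the Hankel matrix will typically have a non-trivial kernel, so $G$ and $g$ are not unique. However, the Fundamental Lemma also guarantees that the future output of any trajectory is uniquely determined by its past and its future input. Hence $Y_f G$ and $Y_f g$ are \emph{unique} objects even when $G$ and $g$ are not, and the reformulation \eqref{prob:IOP_data} is independent of the particular solution of \eqref{eq:BM}. Substituting the identifications $\text{Toep}(Y_f G)=\mathbf{G}$ into \eqref{eq:ach1}--\eqref{eq:ach2} and $Y_f g = \mathbf{y}_{x(0)}$ into the third column-block of the objective of \eqref{prob:IOP} completes the equivalence.
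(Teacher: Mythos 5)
Your proposal is correct and follows essentially the same route as the paper's proof: identify each column of $G$ (resp.\ $g$) via Lemma~\ref{le:Willems} as a valid trajectory with zero past and impulse input (resp.\ with past equal to the recent data and zero future input), invoke observability with $T_{ini}\geq l$ to pin down the initial state of the prediction window, and conclude that $\text{Toep}(Y_fG)=\mathbf{G}$ and $Y_fg=\mathbf{y}_{x(0)}$ independently of the particular solution of \eqref{eq:BM}. Your treatment is in fact slightly more explicit than the paper's, which delegates the uniqueness of $Y_fG$ and $Y_fg$ to a cited result of Markovsky and Rapisarda rather than spelling out the observability argument.
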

\begin{proof}
 In problem~\eqref{prob:IOP}, the system parameters $(A,B,C,x(0))$ appear through the terms $\mathbf{G} =\mathbf{CP}_B$ in the constraints and $\mathbf{CP}_Ax(0)$ in the cost. It is therefore sufficient to show that we are able to substitute both elements with data as per the theorem statement.
 
 Let $G$ be any solution \eqref{eq:BM}. By rearranging the terms, each column of $G$ can be thought as a solution to \eqref{eq:Willems} associated with a zero initial condition and a unitary input $e_i \in \mathbb{R}^m$. Since the hypotheses of Lemma~\ref{le:Willems} are satisfied for $L = T_{ini}+N$, similar to Proposition~11 of \cite{markovsky2008data} we deduce that $Y_f G$ is the first block-column of the system impulse response matrix, independent of the solution $G$. Therefore, we can equivalently substitute  $\mathbf{G} = \text{Toep}_{p \times m}(Y_fG)$ in the constraints \eqref{eq:ach1}-\eqref{eq:ach2} of problem~\eqref{prob:IOP}. Finally, note that $Y_fg$ corresponds to the trajectory starting at $x(0)$ (as implicitly defined by the recent trajectory $\mathbf{y}_{[-T_{ini},-1]}$ and $\mathbf{u}_{[-T_{ini},-1]}$) when applying a zero input  \cite{markovsky2008data}. Therefore, it corresponds to the true free response starting from $x(0)$. 
\end{proof}

For any solution $G$ of the behavioral impulse response  representation \eqref{eq:BM}, the affine constraints \eqref{eq:ach1}-\eqref{eq:ach3} describe all the achievable closed-loop responses for the unknown model and the corresponding controller $\mathbf{K}$. Also, for any solution $g$ of \eqref{eq:BM}, the term $Y_f g$ represents the true free response of the system. As a result, the achieved optimal controller $\mathbf{K}^\star$ and optimal cost $J^\star$ are independent of the chosen solution $(G,g)$ for \eqref{eq:BM}. We have thus characterized a data-driven version of the IOP. Theorem~\ref{th:BIOP} further shows that, by exploiting the BIOP, it is straightforward to cast the LQG problem as a strongly convex program. 

\begin{remark}
To use the language of \cite{dorfler2021bridging,coulson2019data,iannelli2020experiment}, the proposed BIOP formulation belongs to the class of \emph{indirect}, \emph{non-parametric} data-driven controller synthesis methods enabled by behavioral theory. Indeed, the optimal feedback controller is computed in two phases, hence the adjective \emph{indirect}. First,  an impulse response matrix is obtained as part of an implicit identification step based on Willems's fundamental lemma. Second, an optimal control problem is cast and solved by replacing the impulse and free responses with a suitable linear combinations of historical input-output trajectories . The works in \cite{de2019formulas,xue2020data}, propose an alternative  \emph{direct} approach where a single, high-dimensional optimization problem is solved; the decision variables are the weights to be assigned to the different columns of the data Hankel matrix rather than the system closed-loop responses.

A thorough analysis of the advantages and disadvantages inherent to direct or indirect behavioral approaches is a topic of ongoing research in the field. Here, we note  a few initial  points. First, the proposed indirect BIOP can directly encapsulate recent results on statistically optimal non-parametric estimation of an impulse response matrix \cite{yin2020maximum,iannelli2020experiment,alpago2020extended}.  Second,  \eqref{prob:IOP_data} involves a number of  decision variables that only scales with $N$, $m$ and $p$, while in the cost of a direct method the decision variables involved in the control cost would also scale with $T$. Last, we notice that a \emph{direct} BIOP formulation can most likely be obtained by adapting, for instance, the results of Section~VI in  \cite{de2019formulas}; we leave this topic for future work.
\end{remark}

\begin{remark}
While other parametrizations equivalent to the IOP exist, including the System Level Parametrization (SLP) \cite{wang2019system}, and other mixed parametrizations (see \cite{zheng2019system} for a survey), the IOP may be particularly well-suited for an output-feedback data-driven setup. Indeed, the SLP and the mixed parametrizations in \cite{zheng2019system} all explicitly involve state-space parameters in the constraints. By solely using input-output trajectories, the state-space parameters can only be recovered up to an unknown change of variables \cite{markovsky2006exact}, which may be problematic for defining an initial state and noise variances in the LQG cost. Instead, the BIOP is uniquely defined from data, as it only depends on the impulse response matrix without resorting to an internal state representation. 
\end{remark}

\section{Robust BIOP with Noise-Corrupted Data}
\label{sec:BIOP_noisy}
The linear system \eqref{eq:BM} is highly underdetermined when the historical trajectory is very long and noiseless. In particular, any solution $(G,g)$ to \eqref{eq:BM} gives an exact impulse response matrix and free trajectory of the system. In practice, however, the historical and recent data are corrupted by noise. According to the system equations \eqref{eq:dynamic}-\eqref{eq:input}, we can assume historical and recent trajectories are affected by noise $w^h(t),w^r(t),v^h(t),v^r(t)$ at all time instants, with expected values $\mu^h_w,\mu^r_w,\mu^h_v,\mu^r_v$ and variances $\bm{\Sigma}^h_w,\bm{\Sigma}^r_w,\bm{\Sigma}^h_v,\bm{\Sigma}^r_v$ respectively. Hence, the matrix on the left-hand-side of \eqref{eq:BM} becomes full row-rank almost surely and \eqref{eq:BM} can only yield an approximated impulse response matrix and free response. This issue is well-known in the behavioral theory literature, and several promising solutions have recently been proposed \cite{coulson2019data,coulson2020distributionally,de2019formulas,yin2020maximum,alpago2020extended}. We briefly review some of them.

Letting $\hat{U}_p,\hat{Y}_p,\hat{U}_f,\hat{Y}_f$ denote the matrices built upon noisy historical data. In order to impose a block-Toeplitz structure on the impulse response matrix, one simple solution is to choose $G$ and $g$ as 
\begin{align}
\label{eq: LS_for_G}
    G=G_{LS} &= \begin{bmatrix}\hat{U}_p\\ \hat{Y}_p\\\hat{U}_f\end{bmatrix}^+
    \begin{bmatrix}0_{mT_{ini} \times m}\\
    0_{pT_{ini} \times m}\\
    \begin{bmatrix}I_m&0_{ m \times m(N-1)} \end{bmatrix}^\mathsf{T}
    \end{bmatrix}\,, \\
    g = g_{LS} &=  \begin{bmatrix}\hat{U}_p\\\hat{Y}_p\\\hat{U}_f\end{bmatrix}^+\begin{bmatrix}\mathbf{u}^r_{[0,T_{ini}-1]}\\\mathbf{y}^r_{[0,T_{ini}-1]}\\0_{mN \times 1}\end{bmatrix}\,,\label{eq: LS_for_g}
\end{align}
and let $\hatbf{G} = \text{Toep}\left(\widehat{Y}_fG_{LS}\right)$ and $\hatbf{y}_{free} = \widehat{Y}_fg_{LS}$ be the approximate impulse and free responses. While being simple to compute, this least-squares predictor comes without strong statistical guarantees and, for the case of the impulse response matrix, it is biased in general due to the finite-impulse-response truncation error; we refer the interest reader to \cite{sedghizadeh2018data,oymak2019non}. A data-based Kalman filter based solution to reduce the effect of noise is proposed in \cite{alpago2020extended}. Another approach is to minimize a scalar functional $f(\cdot)$ that penalizes the residuals $\Xi_y =(Y_p-\hat{Y}_p)G $ and $\xi_y =(Y_p-\hat{Y}_p)g$ \cite{coulson2020distributionally}.  A choice that reflects the maximum-likelihood interpretation of total least squares is proposed in \cite{yin2020maximum} and consists in solving the optimization problems

\begin{alignat*}{3}
    G_{ML}=&\argmin_{G} &&-log\left[p\left(\begin{bmatrix}\Xi_y\\Y_fG\end{bmatrix}|~G,Y_f\right)\right]  \\
    &\st&&\begin{bmatrix}\hat{U}_p\\ \hat{U}_f\end{bmatrix}
    G=\begin{bmatrix}
    0_{mT_{ini} \times m}\\
    \begin{bmatrix}I_m&0_{ m \times m(N-1)} \end{bmatrix}^\mathsf{T}
    \end{bmatrix}\,,\\
    g_{ML}=&  \argmin_{g} && -log\left[p\left(\begin{bmatrix}\xi_y\\Y_fg\end{bmatrix}|~g,Y_f\right)\right]\\
    &\st&&\begin{bmatrix}\hat{U}_p\\ \hat{U}_f\end{bmatrix}g=\begin{bmatrix}\mathbf{u}^r_{[0,T_{ini}-1]}\\0_{mN \times 1}\end{bmatrix}\,.
\end{alignat*}

 While the above problems are nonconvex, an iterative procedure to obtain an approximate solution is proposed in \cite{yin2020maximum}. A further refinement of the technique applied to impulse response identification is established  in \cite{iannelli2020experiment} through optimal input design. Based on the above discussion, denote the estimated impulse and free responses as $\hatbf{G} = \text{Toep}(\widehat{Y}_f G )$ and $\hatbf{y}_{free} = \widehat{Y}_f g$ respectively.  Independent of the chosen estimator, we will have that
\begin{alignat*}{3}
   &\mathbb{E}[\hatbf{G}] = M_G,\quad&& \text{Var}(\text{vec}(\hatbf{G}) )= \Sigma_G\,,\\
   &\mathbb{E}[\hatbf{y}_{free}] = \mu_y,\quad&& \text{Var}(\hatbf{y}_{free} )= \Sigma_y\,,
\end{alignat*}
where $M_G = \mathbf{G}$ and $\mu_y = \mathbf{y}_{free}$ if and only if the estimators are unbiased, and where $\Sigma_G,\Sigma_y$ are ``small'' in an appropriate sense. We thus work under the assumption that, with high-probability, the errors $\norm{\mathbf{G}-\hatbf{G}}$ and $\norm{\mathbf{y}_{free}-\hatbf{y}_{free}}$ are small; the better the predictor (i.e., smaller bias and variance), the smaller the errors. Motivated as above, we abstract from the particular identification scheme and formalize the following assumption.

\begin{assumption}
\label{ass:bounded_error}
There exist $\epsilon_G>0$ and $\epsilon_0>0$ such that, for any sequence of noisy historical and recent data, with high probability
\begin{equation*}
\norm{\mathbf{G}-\hatbf{G}}_2 = \normtwo{\bm{\Delta}} \leq \epsilon_G, ~ \norm{\mathbf{y}_{free}-\hat{\mathbf{y}}_{free}}_2 = \normtwo{\bm{\delta}_0} \leq \epsilon_0\,.
\end{equation*}
We denote $\epsilon = \max(\epsilon_G,\epsilon_0)$.
\end{assumption}


After condensing the effect of noise into a single error parameter $\epsilon>0$, we are ready to leverage and adapt the analysis technique recently suggested in \cite{zhengfurieri2020sample} for infinite-horizon LQG, which follows the philosophy first introduced in \cite{dean2019sample} for LQR. As we will show, this allows us to quantify the performance degradation due to noise-corrupted data in behavioral models with respect to LQG. The first step is to construct a robust version of \eqref{prob:IOP_data} that is defined in terms of the available noisy historical data. The proof of Proposition~\ref{pr:robust_IOP} is reported in the \preprintswitch{Appendix of \cite{furieri2021behavioral}}{Appendix}. For simplicity, but without loss of generality, we assume that $\mathbf{L}, \mathbf{R}, \bm{\Sigma}_w,\bm{\Sigma}_v$ are  identity matrices with appropriate dimensions.

\begin{proposition}
\label{pr:robust_IOP}
Assume that historical and recent data are affected by noise. Let $\hatbf{G}, \hatbf{y}_{free}$ be estimators of $\mathbf{G},\mathbf{y}_{free}$, respectively, such that Assumption~\ref{ass:bounded_error} holds with $\epsilon>0$. Let $J(\mathbf{G},\mathbf{K}) = \sqrt{\mathbb{E}_{\mathbf{w},\mathbf{v}}\left[\mathbf{y}^\mathsf{T}\mathbf{y}+\mathbf{u}^\mathsf{T}\mathbf{u}\right]}$ denote the square root of the cost in \eqref{eq:cost_output}. Consider the following model-based worst-case robust optimal control problem:
\begin{alignat}{3}
    &\min_{\mathbf{K}}~\max_{\normtwo{\bm{\Delta}} \leq \epsilon,~\normtwo{\bm{\delta}_0} \leq \epsilon}
    && J(\mathbf{G},\mathbf{K})\label{prob:robust_deltas}\\
   &\st~&& 	\eqref{eq:state_compact}, \eqref{eq:output_compact}, \eqref{eq:control_policy}\,. \nonumber
\end{alignat}
Then, problem~\eqref{prob:robust_deltas} is equivalent to 
\begin{equation} 
    \begin{aligned}
        \min_{\hatbm{\Phi}} \; \max_{\normtwo{\bm{\Delta}}\leq \epsilon,~\normtwo{\bm{\delta}_0} \leq \epsilon} \quad &      J(\mathbf{G},\mathbf{K}) = \left\|\begin{bmatrix}\hatbm{\Phi}_{yy}(I-\mathbf{\Delta}\hatbm{\Phi}_{uy})^{-1}&\hatbm{\Phi}_{yy}(I-\mathbf{\Delta}\hatbm{\Phi}_{uy})^{-1}(\hatbf{G}+\mathbf{\Delta})\\\hatbm{\Phi}_{uy}(I-\mathbf{\Delta}\hatbm{\Phi}_{uy})^{-1}&(I-\hatbm{\Phi}_{uy}\mathbf{\Delta})^{-1}\hatbm{\Phi}_{uu}\end{bmatrix}\begin{bmatrix}I&0&\hatbf{y}_{free}+\bm{\delta}_0\\0&I&0\end{bmatrix}\right\|_F   \label{prob:robust_IOP}\\
        \st \quad &  
        \begin{bmatrix}
        	 I&-\hatbf{G}
        	 \end{bmatrix}\begin{bmatrix}
        	 \hatbm{\Phi}_{yy} & \hatbm{\Phi}_{yu}\\\hatbm{\Phi}_{uy} & \hatbm{\Phi}_{uu}
        	 \end{bmatrix}=\begin{bmatrix}
        	 I&0
        	 \end{bmatrix}, \\
        &	 \begin{bmatrix}
        	  \hatbm{\Phi}_{yy} & \hatbm{\Phi}_{yu}\\\hatbm{\Phi}_{uy} & \hatbm{\Phi}_{uu}
        	 \end{bmatrix}
        	 \begin{bmatrix}
        	 -\hatbf{G}\\I
        	 \end{bmatrix}=\begin{bmatrix}
        	 0\\I
        	 \end{bmatrix},  \\
        &	 \hatbm{\Phi}_{yy}, \hatbm{\Phi}_{yu}, \hatbm{\Phi}_{uy}, \hatbm{\Phi}_{uu} \text{\emph{ with causal sparsities.}}   
    \end{aligned}
\end{equation}
\end{proposition}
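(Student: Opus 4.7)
The plan is to reparameterize the controller $\mathbf{K}$ through the IOP applied to the \emph{nominal} impulse response $\hatbf{G}$, then write the true closed-loop responses (those realized when $\mathbf{K}$ is connected to the true plant $\mathbf{G}=\hatbf{G}+\bm{\Delta}$) as explicit functions of the nominal responses $\hatbm{\Phi}$ and the perturbation $\bm{\Delta}$, and finally substitute into the Frobenius cost of Proposition~\ref{prop:strongly_convex}.

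First I would invoke Proposition~\ref{prop:IOP} applied with $\hatbf{G}$ in place of $\mathbf{G}$: every causal controller $\mathbf{K}$ is in bijection with a quadruple $(\hatbm{\Phi}_{yy},\hatbm{\Phi}_{yu},\hatbm{\Phi}_{uy},\hatbm{\Phi}_{uu})$ satisfying \emph{precisely} the three affine/sparsity constraints of \eqref{prob:robust_IOP}, via $\mathbf{K}=\hatbm{\Phi}_{uy}\hatbm{\Phi}_{yy}^{-1}$. This collapses the outer minimization over $\mathbf{K}$ in \eqref{prob:robust_deltas} into the minimization over feasible $\hatbm{\Phi}$ in \eqref{prob:robust_IOP}.

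The core step is to connect the true closed-loop responses $\bm{\Phi}$ entering $J(\mathbf{G},\mathbf{K})$ to $\hatbm{\Phi}$. Starting from $\bm{\Phi}_{uy}=\mathbf{K}(I-\mathbf{G}\mathbf{K})^{-1}=\mathbf{K}(I-\hatbf{G}\mathbf{K}-\bm{\Delta}\mathbf{K})^{-1}$, I would factor $(I-\hatbf{G}\mathbf{K})$ on the right and apply the push-through identity together with $\mathbf{K}(I-\hatbf{G}\mathbf{K})^{-1}=\hatbm{\Phi}_{uy}$ to obtain $\bm{\Phi}_{uy}=\hatbm{\Phi}_{uy}(I-\bm{\Delta}\hatbm{\Phi}_{uy})^{-1}$. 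A completely analogous manipulation, using $\bm{\Phi}_{yy}=I+\mathbf{G}\bm{\Phi}_{uy}$, $\bm{\Phi}_{yu}=\bm{\Phi}_{yy}\mathbf{G}$, $\bm{\Phi}_{uu}=(I-\mathbf{K}\mathbf{G})^{-1}$, combined with $\hatbm{\Phi}_{yy}=I+\hatbf{G}\hatbm{\Phi}_{uy}$ and $\hatbm{\Phi}_{uu}\mathbf{K}=\hatbm{\Phi}_{uy}$, produces the remaining three entries exactly in the form appearing in the objective of \eqref{prob:robust_IOP}. In the finite-horizon setting all the inverses involved are well-posed because $\mathbf{G}$ has no direct feedthrough, so $\bm{\Delta}$ also has zero diagonal blocks, making $\bm{\Delta}\hatbm{\Phi}_{uy}$ and $\hatbm{\Phi}_{uy}\bm{\Delta}$ strictly block-lower-triangular and hence nilpotent.

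To finish, I would substitute the four expressions together with $\mathbf{y}_{x(0)}=\hatbf{y}_{free}+\bm{\delta}_0$ into the closed-form Frobenius-norm cost of Proposition~\ref{prop:strongly_convex} specialized to $\mathbf{L}=\mathbf{R}=\bm{\Sigma}_v=\bm{\Sigma}_w=I$; this reproduces the exact matrix whose Frobenius norm appears in \eqref{prob:robust_IOP}, and taking the inner maximum over $\normtwo{\bm{\Delta}}\le\epsilon,\normtwo{\bm{\delta}_0}\le\epsilon$ closes the argument. The main obstacle is purely algebraic: the push-through identity and the various block-matrix manipulations must be applied in the correct order, and one has to track which products remain strictly block-lower-triangular so that the relevant inverses exist for \emph{all} admissible $\bm{\Delta}$. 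Crucially, no small-gain or spectral-radius assumption is needed, since the parameterization is an exact identity rather than a Neumann-series expansion.
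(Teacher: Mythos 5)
Your proposal is correct and follows essentially the same route as the paper's proof: reparametrize $\mathbf{K}$ via the IOP applied to the nominal plant $\hatbf{G}$ (so the constraints of \eqref{prob:robust_IOP} exactly capture all causal controllers), then substitute $\mathbf{G}=\hatbf{G}+\bm{\Delta}$ and $\mathbf{y}_{x(0)}=\hatbf{y}_{free}+\bm{\delta}_0$ into the true closed-loop responses and simplify using the affine identities satisfied by $\hatbm{\Phi}$, which is algebraically equivalent to your push-through manipulation. Your additional remark that all inverses are well-posed for every admissible $\bm{\Delta}$ because $\bm{\Delta}\hatbm{\Phi}_{uy}$ is strictly block-lower-triangular, hence nilpotent, is a correct observation the paper defers to the proof of Lemma~\ref{le:upperbound}.
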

The robust optimization problem in Proposition~\ref{pr:robust_IOP} is highly non-convex. We therefore proceed with deriving a quasi-convex upperbound to $J(\mathbf{G},\mathbf{K})$ to be used for controller synthesis and suboptimality analysis.
\subsection{A tractable robust BIOP formulation}
The following lemma serves as the basis to derive a tractable formulation of \eqref{prob:robust_IOP}. Its rather lengthy technical proof is reported in the \preprintswitch{Appendix of \cite{furieri2021behavioral}}{Appendix}.
\begin{lemma}
\label{le:upperbound}
 Let $\epsilon = \max(\epsilon_G,\epsilon_0)$ and assume $\epsilon \norm{\hatbm{\Phi}_{uy}}_2< 1$. Further assume that $\norm{\hatbm{\Phi}_{uy}}_2\leq \alpha$ for $\alpha >0$. Then, we have
 \begin{equation}
     \label{eq:nonconvex_bound}
     J(\mathbf{G},\mathbf{K}) \leq  \frac{1}{1-\epsilon \norm{\hatbm{\Phi}_{uy}}_2} \norm{\begin{bmatrix}
        	  \sqrt{1+h(\epsilon,\alpha,\hatbf{G}) + h(\epsilon,\alpha,\hatbf{y}_{free})}\hatbm{\Phi}_{yy} & \hatbm{\Phi}_{yu} & \hatbm{\Phi}_{yy}\hatbf{y}_{free}\\\sqrt{1+ h(\epsilon,\alpha,\hatbf{y}_{free})}\hatbm{\Phi}_{uy} & \hatbm{\Phi}_{uu} & \hatbm{\Phi}_{uy}\hatbf{y}_{free}
        	 \end{bmatrix}}_F
 \end{equation}
 where
 \begin{equation*}
     h(\epsilon, \alpha,\mathbf{Y}) = \epsilon^2(2 + \alpha\|\mathbf{Y}\|_2 )^2+2\epsilon\norm{\mathbf{Y}}_{2}(2+\alpha\norm{\mathbf{Y}}_2)\,.
 \end{equation*}
\end{lemma}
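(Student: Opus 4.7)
The starting point is the expression $J(\mathbf{G},\mathbf{K}) = \|\Psi\|_F$ from Proposition~\ref{pr:robust_IOP}, where $\Psi$ is the six-block matrix depending nonlinearly on the uncertainties through $M:=(I-\bm{\Delta}\hatbm{\Phi}_{uy})^{-1}$ and $(I-\hatbm{\Phi}_{uy}\bm{\Delta})^{-1}$. Since $\epsilon\|\hatbm{\Phi}_{uy}\|_2<1$, a Neumann-series argument yields $\|M\|_2$, $\|(I-\hatbm{\Phi}_{uy}\bm{\Delta})^{-1}\|_2 \leq (1-\epsilon\|\hatbm{\Phi}_{uy}\|_2)^{-1}$; this factor will be extracted at the end. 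The plan is to work block-by-block, decomposing each entry of $\Psi$ into a ``nominal'' part (independent of $\bm{\Delta}$, $\bm{\delta}_0$) plus ``errors,'' and to bound the sum of squared Frobenius norms of the six blocks.

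The decomposition uses the resolvent identity $M-I=M\bm{\Delta}\hatbm{\Phi}_{uy}$ together with the IOP affine relations $\hatbm{\Phi}_{yu}=\hatbm{\Phi}_{yy}\hatbf{G}$ and $\hatbm{\Phi}_{uu}=I+\hatbm{\Phi}_{uy}\hatbf{G}$. For example, writing $\hatbm{\Phi}_{yy}M(\hatbf{G}+\bm{\Delta}) = \hatbm{\Phi}_{yy}(\hatbf{G}+\bm{\Delta})+\hatbm{\Phi}_{yy}(M-I)(\hatbf{G}+\bm{\Delta}) = \hatbm{\Phi}_{yu}+\hatbm{\Phi}_{yy}\bm{\Delta}+\hatbm{\Phi}_{yy}M\bm{\Delta}\hatbm{\Phi}_{uy}(\hatbf{G}+\bm{\Delta})$ produces the nominal $\hatbm{\Phi}_{yu}$ plus two error pieces, and analogously $\hatbm{\Phi}_{yy}M(\hatbf{y}_{free}+\bm{\delta}_0)=\hatbm{\Phi}_{yy}\hatbf{y}_{free}+\hatbm{\Phi}_{yy}\bm{\delta}_0+\hatbm{\Phi}_{yy}M\bm{\Delta}\hatbm{\Phi}_{uy}(\hatbf{y}_{free}+\bm{\delta}_0)$, with similar decompositions for the $\hatbm{\Phi}_{uy}$-row. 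Applying the triangle inequality and $\|AB\|_F\leq \|A\|_F\|B\|_2$ with $\|\bm{\Delta}\|_2,\|\bm{\delta}_0\|_2\leq\epsilon$ and $\|\hatbm{\Phi}_{uy}\|_2\leq\alpha$, each error term is dominated by $\|\hatbm{\Phi}_{yy}\|_F$ (resp.\ $\|\hatbm{\Phi}_{uy}\|_F$) multiplied by a coefficient of the form $\epsilon(2+\alpha\|\hatbf{G}\|_2)/(1-\epsilon\|\hatbm{\Phi}_{uy}\|_2)$ or $\epsilon(2+\alpha\|\hatbf{y}_{free}\|_2)/(1-\epsilon\|\hatbm{\Phi}_{uy}\|_2)$. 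For the column-$1$ blocks and for the $u$-row of column $2$, it suffices to invoke directly the crude bound $\|\hatbm{\Phi}_{yy}M\|_F\leq\|\hatbm{\Phi}_{yy}\|_F/(1-\epsilon\|\hatbm{\Phi}_{uy}\|_2)$ and the analogue $\|(I-\hatbm{\Phi}_{uy}\bm{\Delta})^{-1}\hatbm{\Phi}_{uu}\|_F\leq\|\hatbm{\Phi}_{uu}\|_F/(1-\epsilon\|\hatbm{\Phi}_{uy}\|_2)$.

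The main obstacle lies in the final step: after squaring each ``nominal $+$ error'' bound and multiplying by $(1-\epsilon\|\hatbm{\Phi}_{uy}\|_2)^2$, cross terms of the form $2\|\textrm{nominal}\|_F\cdot\|\textrm{error}\|_F$ remain alongside squared-error terms, and these must be recast exactly as $h(\epsilon,\alpha,\hatbf{G})\|\hatbm{\Phi}_{yy}\|_F^2$, $h(\epsilon,\alpha,\hatbf{y}_{free})\|\hatbm{\Phi}_{yy}\|_F^2$, and $h(\epsilon,\alpha,\hatbf{y}_{free})\|\hatbm{\Phi}_{uy}\|_F^2$. The key trick is to overbound the nominal Frobenius norms \emph{only inside the cross terms} via $\|\hatbm{\Phi}_{yu}\|_F=\|\hatbm{\Phi}_{yy}\hatbf{G}\|_F\leq\|\hatbm{\Phi}_{yy}\|_F\|\hatbf{G}\|_2$ and $\|\hatbm{\Phi}_{yy}\hatbf{y}_{free}\|_2\leq\|\hatbm{\Phi}_{yy}\|_F\|\hatbf{y}_{free}\|_2$ (and their $\hatbm{\Phi}_{uy}$ analogues); the cross plus squared-error then assemble precisely into $2\epsilon\|Y\|_2(2+\alpha\|Y\|_2)+\epsilon^2(2+\alpha\|Y\|_2)^2=h(\epsilon,\alpha,Y)$. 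Summing the six blockwise bounds reproduces the Frobenius norm squared of the target matrix, and taking the square root yields the claimed bound with prefactor $1/(1-\epsilon\|\hatbm{\Phi}_{uy}\|_2)$.
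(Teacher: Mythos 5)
Your proposal is correct and follows essentially the same route as the paper's proof: the paper also bounds the six blocks separately, splits the column-2 and column-3 blocks into a nominal term plus error terms (via the Neumann series $\sum_{k\ge 1}(\bm{\Delta}\hatbm{\Phi}_{uy})^k$, which is algebraically the same as your resolvent identity $M-I=M\bm{\Delta}\hatbm{\Phi}_{uy}$), and then absorbs the cross and squared-error terms into $h(\epsilon,\alpha,\cdot)$ by overbounding $\|\hatbm{\Phi}_{yu}\|_F=\|\hatbm{\Phi}_{yy}\hatbf{G}\|_F\le\|\hatbm{\Phi}_{yy}\|_F\|\hatbf{G}\|_2$ (and its analogues) only inside the cross terms, exactly as you describe.
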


Exploiting the reformulation idea first introduced in \cite{matni2017scalable} and utilized for analysis in \cite{zhengfurieri2020sample}, we  are now ready to establish a quasi-convex reformulation of problem~\eqref{prob:robust_IOP}. 

\begin{theorem}
\label{thm: robust formulation}
Given estimation errors $\epsilon_G,\epsilon_0$ with $\epsilon = \max(\epsilon_G,\epsilon_0)$, and for any $\alpha >0$, the minimal cost of problem \eqref{prob:robust_deltas} is upper bounded by the  minimal cost of the following quasi-convex program: 
\begin{alignat}{3}
&\min_{\gamma \in [0,\epsilon^{-1})} \frac{1}{1-\epsilon \gamma} && \min_{\hatbm{\Phi}} \qquad J_{inner} \label{prob:quasi_convex}\\
&\st~&&\begin{bmatrix}
        	 I&-\hatbf{G}
        	 \end{bmatrix}\begin{bmatrix}
        	 \hatbm{\Phi}_{yy} & \hatbm{\Phi}_{yu}\\\hatbm{\Phi}_{uy} & \hatbm{\Phi}_{uu}
        	 \end{bmatrix}=\begin{bmatrix}
        	 I&0
        	 \end{bmatrix}, \nonumber\\
        &~~&&	 \begin{bmatrix}
        	  \hatbm{\Phi}_{yy} & \hatbm{\Phi}_{yu}\\\hatbm{\Phi}_{uy} & \hatbm{\Phi}_{uu}
        	 \end{bmatrix}
        	 \begin{bmatrix}
        	 -\hatbf{G}\\I
        	 \end{bmatrix}=\begin{bmatrix}
        	 0\\I
        	 \end{bmatrix},  \nonumber\\
        &~~&&	 \hatbm{\Phi}_{yy}, \hatbm{\Phi}_{yu}, \hatbm{\Phi}_{uy}, \hatbm{\Phi}_{uu} \text{\emph{ with causal sparsities,}} \nonumber \\
        &~~&& \norm{\hatbm{\Phi}_{uy}}_2 \leq \min(\gamma,\alpha)\,.\nonumber
\end{alignat}
where $J_{inner}$ is equal to
\begin{equation*}
    \norm{\begin{bmatrix}
        	  \sqrt{1\hspace{-0.08cm} +\hspace{-0.08cm} h(\epsilon,\alpha,\hatbf{G})\hspace{-0.08cm} +\hspace{-0.08cm}  h(\epsilon,\alpha,\hatbf{y}_{free})}\hatbm{\Phi}_{yy} & \hatbm{\Phi}_{yu} & \hatbm{\Phi}_{yy}\hatbf{y}_{free}\\\sqrt{1+ h(\epsilon,\alpha,\hatbf{y}_{free})}\hatbm{\Phi}_{uy} & \hatbm{\Phi}_{uu} & \hatbm{\Phi}_{uy}\hatbf{y}_{free}
        	 \end{bmatrix}}_F\hspace{-0.15cm} .
\end{equation*}
\end{theorem}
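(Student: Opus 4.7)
The plan is to chain the three reductions already in the paper: Proposition~\ref{pr:robust_IOP} rewrites \eqref{prob:robust_deltas} as the min-max \eqref{prob:robust_IOP} over closed-loop responses $\hatbm{\Phi}$ compatible with the estimated impulse response $\hatbf{G}$; Lemma~\ref{le:upperbound} bounds the inner worst-case cost by a Frobenius norm of an affine function of $\hatbm{\Phi}$ divided by $1 - \epsilon \|\hatbm{\Phi}_{uy}\|_2$; and a scalar slack variable $\gamma$ is then introduced to decouple this non-convex denominator from the decision variable.

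First I would fix any feasible $\hatbm{\Phi}$ for \eqref{prob:robust_IOP} satisfying the hypotheses of Lemma~\ref{le:upperbound}, namely $\|\hatbm{\Phi}_{uy}\|_2 \leq \alpha$ and $\epsilon\|\hatbm{\Phi}_{uy}\|_2 < 1$. The lemma then gives
\begin{equation*}
\max_{\normtwo{\bm{\Delta}} \leq \epsilon,\, \normtwo{\bm{\delta}_0} \leq \epsilon} J(\mathbf{G},\mathbf{K}) \;\leq\; \frac{J_{inner}(\hatbm{\Phi})}{1-\epsilon\|\hatbm{\Phi}_{uy}\|_2}\,,
\end{equation*}
so minimizing the right-hand side over $\hatbm{\Phi}$ upper bounds the minimum of \eqref{prob:robust_IOP}, which by Proposition~\ref{pr:robust_IOP} equals the minimum of \eqref{prob:robust_deltas}. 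Note that the coefficients $h(\epsilon,\alpha,\hatbf{G})$ and $h(\epsilon,\alpha,\hatbf{y}_{free})$ entering $J_{inner}$ are constants fixed by the problem data, so $J_{inner}(\hatbm{\Phi})$ is the Frobenius norm of an affine function of $\hatbm{\Phi}$ and hence convex.

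Next I would introduce a scalar slack variable $\gamma \in [0,\epsilon^{-1})$ and require $\|\hatbm{\Phi}_{uy}\|_2 \leq \gamma$. Combined with $\|\hatbm{\Phi}_{uy}\|_2 \leq \alpha$, this yields the single convex constraint $\|\hatbm{\Phi}_{uy}\|_2 \leq \min(\gamma,\alpha)$. Monotonicity of $s \mapsto (1-\epsilon s)^{-1}$ on $[0,\epsilon^{-1})$ then implies $(1-\epsilon\|\hatbm{\Phi}_{uy}\|_2)^{-1} \leq (1-\epsilon\gamma)^{-1}$, which upper bounds the previous expression by $(1-\epsilon\gamma)^{-1}J_{inner}(\hatbm{\Phi})$ --- exactly the nested objective in \eqref{prob:quasi_convex}. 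For each fixed $\gamma$ the inner program is a convex (second-order conic) minimization of $J_{inner}$ subject to the two affine IOP achievability constraints and the norm bound on $\hatbm{\Phi}_{uy}$; since the outer variable $\gamma$ is scalar, the overall program is quasi-convex and amenable to a line search, as in the reformulation idea of \cite{matni2017scalable}.

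The main subtlety will be consistency book-keeping, rather than hard analysis: I must ensure that the feasible set of \eqref{prob:quasi_convex} exactly matches the standing hypotheses of Lemma~\ref{le:upperbound}, so that the lemma can be invoked pointwise across the feasible set. The constraints $\gamma < \epsilon^{-1}$ and $\|\hatbm{\Phi}_{uy}\|_2 \leq \min(\gamma,\alpha)$ simultaneously certify $\epsilon\|\hatbm{\Phi}_{uy}\|_2 < 1$ and $\|\hatbm{\Phi}_{uy}\|_2 \leq \alpha$, closing the chain of inequalities and yielding the claimed upper bound on \eqref{prob:robust_deltas}. Crucially, the $\alpha$ appearing inside $h(\epsilon,\alpha,\cdot)$ is an \emph{a priori} fixed parameter and does not depend on the decision variable, so convexity of $J_{inner}$ is preserved.
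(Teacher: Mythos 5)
Your proposal is correct and follows essentially the same route as the paper, whose proof is a one-line appeal to Lemma~\ref{le:upperbound} together with the scalar-slack reformulation of \cite[Theorem~3.2]{zhengfurieri2020sample}; you have simply spelled out the chain (Proposition~\ref{pr:robust_IOP} $\to$ Lemma~\ref{le:upperbound} $\to$ monotone decoupling via $\gamma$) and correctly checked that the constraints $\gamma<\epsilon^{-1}$ and $\|\hatbm{\Phi}_{uy}\|_2\leq\min(\gamma,\alpha)$ certify the hypotheses of the lemma pointwise on the feasible set. No gaps.
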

\begin{proof}
Directly follows from Lemma~\ref{le:upperbound} and \cite[Theorem~3.2]{zhengfurieri2020sample}.
\end{proof}

First, notice that the inner minimization problem in \eqref{prob:quasi_convex} is strongly convex for a fixed $\gamma$, and that the outer function $(1-\epsilon \gamma)^{-1}$ is monotonically increasing in $\gamma$. Hence, it is well-known that the overall program can be efficiently solved by golden search on $\gamma$ and solving the corresponding instances of the inner program. Second,  we explicitly take into account the effect of an unknown and noisy initial state $x(0) \in \mathbb{R}^n$ through the parameter $\hatbf{y}_{free}$. Assuming $x(0)=0$ as per \cite{xue2020data} may not be realistic for practical purposes, as the user initially lets the system free to evolve in order to harvest data. Furthermore, the following analysis will show that, for finite-horizon control problems, the suboptimality strongly depends on $x(0)\in \mathbb{R}^n$ as a function of $\norm{\mathbf{y}_{free}}^2_2$.  Last, we note that the constraint on $\|\hatbm{\Phi}_{uy}\|_2$ is the main source of suboptimality with respect to the true LQG problem~\eqref{prob:IOP}; as pointed out in \cite{dean2019sample,xue2020data,zhengfurieri2020sample}, this additional constraint enforces stronger disturbance rejection properties, for which we have to pay in terms of performance. We are now ready to quantify the suboptimality of \eqref{prob:quasi_convex} with respect to \eqref{prob:IOP}.

\section{Suboptimality Analysis}
\label{sec:suboptimality}
In this section, we denote as $\starbf{K},\starbm{\Phi}$  the optimal controller and corresponding closed-loop responses for the real LQG problem \eqref{prob:IOP}.  Furthermore, we denote as $\hatbf{K}^\star,\hatbm{\Phi}^\star$ the optimal controller and corresponding closed-loop responses for the quasi-convex program \eqref{prob:quasi_convex} and let $J^\star = J(\mathbf{G},\mathbf{K}^\star)$ and $\hat{J} = J(\mathbf{G},\hatbf{K}^\star)$.

Next, inspired by the analysis in \cite{zhengfurieri2020sample}, we show that if $\epsilon$ is small enough it holds
\begin{equation*}
    \frac{\hat{J}^2-{J^\star}^2}{{J^\star}^2} = \mathcal{O}\left(\epsilon\right)\,.
\end{equation*}
In other words, for a small estimation error $\epsilon$ on the impulse response, applying controller $\hatbf{K}^\star$ (which is solely computed with noisy data) to the \emph{real} plant achieves almost optimal closed-loop performance.

We start with a lemma that analytically characterizes a feasible solution to problem \eqref{prob:quasi_convex}. We then proceed with characterizing the suboptimality bound. The proofs of Lemma~\ref{le:feasible} and Theorem~\ref{th:suboptimality} are reported in the \preprintswitch{Appendix of \cite{furieri2021behavioral}}{Appendix}.
\begin{lemma}[Feasible solution]
\label{le:feasible}
Let $\eta = \epsilon \normtwo{\starbm{\Phi}_{uy}}$, and select $\alpha \geq \sqrt{2}\frac{\eta}{\epsilon(1-\eta)}$. Then, if $\eta< \frac{1}{5}$, the following expressions
    \begin{align}
        \tildebm{\Phi}_{yy} &= \starbm{\Phi}_{yy} (I \hspace{-0.08cm}+ \hspace{-0.08cm}\mathbf{\Delta} \starbm{\Phi}_{uy})^{-1},~ 
        \tildebm{\Phi}_{yu} = \starbm{\Phi}_{yy} (I \hspace{-0.08cm}+\hspace{-0.08cm} \mathbf{\Delta} \starbm{\Phi}_{uy})^{-1}(\mathbf{G} \hspace{-0.08cm}-\hspace{-0.08cm} \mathbf{\Delta}), \nonumber\\
        \tildebm{\Phi}_{uy} &= \starbm{\Phi}_{uy} (I + \mathbf{\Delta} \starbm{\Phi}_{uy})^{-1},~
        \tildebm{\Phi}_{uu} = (I +  \starbm{\Phi}_{uy}\mathbf{\Delta})^{-1}\starbm{\Phi}_{uu} , \nonumber \\
        \widetilde{\gamma} &= \frac{\sqrt{2}\eta}{\epsilon(1-\eta)},   \label{eq:suboptimal}
    \end{align}
provide a feasible solution to problem \eqref{prob:quasi_convex}.
\end{lemma}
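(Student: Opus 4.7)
My plan is to directly verify all constraints of problem \eqref{prob:quasi_convex} at the candidate $(\tildebm{\Phi},\widetilde{\gamma})$, exploiting the fact that $\starbm{\Phi}$ already satisfies the analogous model-based affine constraints with the true $\mathbf{G}$ in place of $\hatbf{G}$. I set $\bm{\Delta} := \mathbf{G} - \hatbf{G}$ (so that $\hatbf{G} + \bm{\Delta} = \mathbf{G}$, matching the perturbation convention used in Proposition~\ref{pr:robust_IOP}) and $\eta := \epsilon\normtwo{\starbm{\Phi}_{uy}}$, with $\normtwo{\bm{\Delta}} \leq \epsilon$ by Assumption~\ref{ass:bounded_error}. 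This gives $\normtwo{\bm{\Delta}\starbm{\Phi}_{uy}} \leq \eta < 1$, so via a Neumann-series argument both $I + \bm{\Delta}\starbm{\Phi}_{uy}$ and $I + \starbm{\Phi}_{uy}\bm{\Delta}$ are invertible with operator norms bounded by $1/(1-\eta)$, guaranteeing that $\tildebm{\Phi}$ in \eqref{eq:suboptimal} is well-defined.

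Next I would verify the two affine equalities in \eqref{prob:quasi_convex} by direct substitution. For instance,
\begin{equation*}
\tildebm{\Phi}_{yy} - \hatbf{G}\tildebm{\Phi}_{uy} = (\starbm{\Phi}_{yy} - \hatbf{G}\starbm{\Phi}_{uy})(I + \bm{\Delta}\starbm{\Phi}_{uy})^{-1} = (\starbm{\Phi}_{yy} - (\mathbf{G}-\bm{\Delta})\starbm{\Phi}_{uy})(I + \bm{\Delta}\starbm{\Phi}_{uy})^{-1},
\end{equation*}
which collapses to $I$ upon invoking $\starbm{\Phi}_{yy} - \mathbf{G}\starbm{\Phi}_{uy} = I$ from the model-based constraint that $\starbm{\Phi}$ satisfies. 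The remaining three block equalities follow by analogous manipulations, using also $\starbm{\Phi}_{yu} = \starbm{\Phi}_{yy}\mathbf{G}$ and $\starbm{\Phi}_{uu} = I + \starbm{\Phi}_{uy}\mathbf{G}$, together with the push-through identity $\starbm{\Phi}_{uy}(I+\bm{\Delta}\starbm{\Phi}_{uy})^{-1} = (I+\starbm{\Phi}_{uy}\bm{\Delta})^{-1}\starbm{\Phi}_{uy}$. The causal (block-lower-triangular) sparsity of $\tildebm{\Phi}$ is inherited from the corresponding structure of $\starbm{\Phi}$ and $\bm{\Delta}$, since sums, products, and inverses of block-lower-triangular matrices with invertible diagonal blocks remain block-lower-triangular.

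For the remaining inequality constraint, the Neumann bound yields $\normtwo{\tildebm{\Phi}_{uy}} = \normtwo{\starbm{\Phi}_{uy}(I+\bm{\Delta}\starbm{\Phi}_{uy})^{-1}} \leq \eta/[\epsilon(1-\eta)]$. Since $\sqrt{2}>1$, this is at most $\widetilde{\gamma} = \sqrt{2}\eta/[\epsilon(1-\eta)]$, and the hypothesis $\alpha \geq \sqrt{2}\eta/[\epsilon(1-\eta)]$ then gives $\normtwo{\tildebm{\Phi}_{uy}} \leq \min(\widetilde{\gamma},\alpha)$. Lastly, membership $\widetilde{\gamma} \in [0,\epsilon^{-1})$ reduces to $\sqrt{2}\eta < 1-\eta$, i.e.\ $\eta < \sqrt{2}-1 \approx 0.414$, which is comfortably implied by $\eta < 1/5$; the stronger bound is presumably reserved for the subsequent suboptimality analysis of Theorem~\ref{th:suboptimality} rather than needed for feasibility. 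The only delicate part of the argument is the algebraic rearrangement in the affine-constraint verification, where one must keep careful track of whether $\bm{\Delta}$ multiplies $\starbm{\Phi}_{uy}$ from the left or the right inside the inverse so that the push-through identity aligns correctly across all four blocks; beyond this bookkeeping the proof reduces to Neumann-series norm bounds.
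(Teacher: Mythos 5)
Your proposal is correct and follows essentially the same route as the paper's proof: the paper simply notes that $\tildebm{\Phi}$ is by construction the set of closed-loop responses obtained by applying $\starbf{K}$ to the estimated plant $\hatbf{G}$ (so the affine constraints hold automatically), then checks $\widetilde{\gamma}<\epsilon^{-1}$ from $\eta<\tfrac{1}{5}$ and bounds $\normtwo{\tildebm{\Phi}_{uy}}\leq \normtwo{\starbm{\Phi}_{uy}}/(1-\eta)\leq\widetilde{\gamma}\leq\alpha$ exactly as you do. Your explicit substitution into the affine constraints and the sign bookkeeping for $\bm{\Delta}=\mathbf{G}-\hatbf{G}$ merely spell out what the paper labels ``easy to verify.''
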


\begin{theorem}
\label{th:suboptimality}
Suppose that $ \frac{5\sqrt{2}}{4}\normtwo{\starbm{\Phi}_{uy}}\leq \alpha \leq 5\normtwo{\starbm{\Phi}_{uy}}$ and that $\epsilon < \frac{1}{5 \normtwo{\starbm{\Phi}_{uy}}}$.
Then, when applying the optimal solution $\hatbf{K}^\star$ of \eqref{prob:quasi_convex} to the true plant $\mathbf{G}$, the relative error with respect to the true optimal cost is upper bounded as
\begin{align*}
        \frac{\hat{J}^2-{J^\star}^2}{{J^\star}^2} &\leq  20\epsilon \normtwo{\starbf{\Phi}_{uy}}  + 4(M+V)\\
       &= \mathcal{O}\left(\epsilon \norm{\starbm{\Phi}_{uy}}_2(\norm{\mathbf{G}}^2_2+\norm{\mathbf{y}_{free}}^2_2)\right)\,,
\end{align*}
where 
\begin{align*}
   & M = h(\epsilon,\alpha,\hatbf{G})+ h(\epsilon,\alpha,\hatbf{y}_{free})+ h(\epsilon,\norm{\starbm{\Phi}_{uy}}_2,\mathbf{G}) + h(\epsilon,\norm{\starbm{\Phi}_{uy}}_2,\mathbf{y}_{free})\,,\\
      & V = h(\epsilon,\alpha,\hatbf{y}_{free})+h(\epsilon,\norm{\starbm{\Phi}_{uy}}_2,\mathbf{y}_{free})\,,
\end{align*}
and $h(a, b,\mathbf{Y}) = a^2(2 + b\|\mathbf{Y}\|_2 )^2+2a\norm{\mathbf{Y}}_2(2+b\norm{\mathbf{Y}}_2)$.

\end{theorem}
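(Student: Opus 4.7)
My plan is to build a three-link chain of inequalities $\hat{J}\le (\text{opt. value of \eqref{prob:quasi_convex}})\le (\text{cost at the witness }\tilde{\boldsymbol{\Phi}})\le (\text{something close to }J^\star)$, and then square and expand to leading order in $\epsilon$. The first link follows because $\hat{\boldsymbol{\Phi}}^\star$ is feasible for the robust formulation and, by construction of \eqref{prob:quasi_convex}, satisfies $\epsilon\|\hat{\boldsymbol{\Phi}}^\star_{uy}\|_2\le\epsilon\gamma^\star<1$; then Lemma~\ref{le:upperbound} gives $\hat{J}=J(\mathbf{G},\hat{\mathbf{K}}^\star)\le(1-\epsilon\gamma^\star)^{-1}J_{inner}(\hat{\boldsymbol{\Phi}}^\star)$, which is exactly the optimal value of \eqref{prob:quasi_convex}. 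The second link is pure optimality: the triple $(\tilde{\boldsymbol{\Phi}},\tilde{\gamma})$ from Lemma~\ref{le:feasible} is feasible under the hypotheses $\eta=\epsilon\|\boldsymbol{\Phi}^\star_{uy}\|_2<1/5$ and $\alpha\ge\sqrt{2}\eta/[\epsilon(1-\eta)]$ (the lower bound on $\alpha$ in the theorem statement), so the optimal value of \eqref{prob:quasi_convex} is at most $(1-\epsilon\tilde{\gamma})^{-1}J_{inner}(\tilde{\boldsymbol{\Phi}})$, and $1-\epsilon\tilde{\gamma}=(1-(1+\sqrt{2})\eta)/(1-\eta)$ is strictly positive under $\eta<1/5$.

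For the third and crucial link, I would exploit the analytic form of $\tilde{\boldsymbol{\Phi}}$: each of its four blocks is a multiplicative perturbation of the corresponding optimal block, driven by $(I+\mathbf{\Delta}\boldsymbol{\Phi}^\star_{uy})^{-1}$ (or its mirror). Using the achievability identities \eqref{eq:ach1}--\eqref{eq:ach2} for $\boldsymbol{\Phi}^\star$, namely $\boldsymbol{\Phi}^\star_{yu}=\boldsymbol{\Phi}^\star_{yy}\mathbf{G}$ and $\boldsymbol{\Phi}^\star_{uu}=I+\boldsymbol{\Phi}^\star_{uy}\mathbf{G}$, I rewrite $\tilde{\boldsymbol{\Phi}}_{yu}=\boldsymbol{\Phi}^\star_{yu}-\boldsymbol{\Phi}^\star_{yy}(I+\mathbf{\Delta}\boldsymbol{\Phi}^\star_{uy})^{-1}\mathbf{\Delta}(I+\boldsymbol{\Phi}^\star_{uy}\mathbf{\Delta})$, etc., and bound operator norms of the Neumann factors by $(1-\eta)^{-1}\le 5/4$. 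The factor $\sqrt{1+h(\epsilon,\alpha,\hat{\mathbf{G}})+h(\epsilon,\alpha,\hat{\mathbf{y}}_{free})}$ that multiplies $\tilde{\boldsymbol{\Phi}}_{yy}$ inside $J_{inner}$ is then peeled off and contributes directly to the quantity $M$; the analogous factor in the second block row contributes to $V$. The replacement $\hat{\mathbf{y}}_{free}\to\mathbf{y}_{free}$ (accounting for $\|\boldsymbol{\delta}_0\|_2\le\epsilon$) and $\hat{\mathbf{G}}\to\mathbf{G}$ in the free-response column $\tilde{\boldsymbol{\Phi}}_{yy}\hat{\mathbf{y}}_{free}$ will generate the complementary corrections $h(\epsilon,\|\boldsymbol{\Phi}^\star_{uy}\|_2,\mathbf{G})$ and $h(\epsilon,\|\boldsymbol{\Phi}^\star_{uy}\|_2,\mathbf{y}_{free})$, which complete the definitions of $M$ and $V$.

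Putting the three links together, I obtain $\hat{J}\le C(\eta)\sqrt{(1+M)}\,\sqrt{(J^\star)^2+\text{free-response correction}}$ where $C(\eta)=(1-\eta)/(1-(1+\sqrt{2})\eta)$. Squaring, using $J^\star\ge \|\boldsymbol{\Phi}^\star_{yy}\mathbf{y}_{free}\|_F$ (so that the free-response correction is bounded relative to $(J^\star)^2$ by a factor at most $V$), and invoking $\eta<1/5$ to expand $C(\eta)^2$ to leading order in $\eta$, I then read off $\hat{J}^2/(J^\star)^2-1\le [C(\eta)^2-1]+C(\eta)^2(M+V)+\text{cross terms}$. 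The leading-order expansion of $C(\eta)^2-1$ is $\mathcal{O}(\eta)$, and a crude numerical bound under $\eta<1/5$ yields the constant $20$; the upper bound $\alpha\le 5\|\boldsymbol{\Phi}^\star_{uy}\|_2$ ensures that the $h(\epsilon,\alpha,\cdot)$ terms inside $M,V$ scale as $\mathcal{O}(\epsilon\|\boldsymbol{\Phi}^\star_{uy}\|_2(\|\mathbf{G}\|_2^2+\|\mathbf{y}_{free}\|_2^2))$, giving the stated $\mathcal{O}(\epsilon)$ asymptotics.

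The main obstacle will be the bookkeeping in the third link: I must carefully decompose $J_{inner}(\tilde{\boldsymbol{\Phi}})$ block by block so that (i) the Frobenius norm can be split between a dominant $J^\star$-like term and residual perturbations, (ii) each perturbation is bounded by exactly one of the $h$-functions appearing in $M$ or $V$ (not a larger mixture), and (iii) the $(1-\epsilon\tilde{\gamma})^{-1}$ outer factor, when squared, contributes the clean linear-in-$\epsilon$ coefficient $20\|\boldsymbol{\Phi}^\star_{uy}\|_2$ rather than a higher-order blow-up. Triangle-inequality looseness here would pollute the bound with products $M\cdot V$ or $\epsilon^2\|\boldsymbol{\Phi}^\star_{uy}\|_2^2$ terms; the hypothesis $\eta<1/5$ is precisely what allows those to be re-absorbed into the stated additive form.
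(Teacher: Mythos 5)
Your proposal is correct and follows essentially the same route as the paper: the same three-link chain (Lemma~\ref{le:upperbound} for $\hat{J}$, optimality of \eqref{prob:quasi_convex} against the feasible witness $(\tildebm{\Phi},\widetilde{\gamma})$ from Lemma~\ref{le:feasible}, then a reverse perturbation bound relating $J_{inner}(\tildebm{\Phi})$ to $J^\star$ with the corrections collected into $M$ and $V$), followed by squaring, absorbing $M\norm{\starbm{\Phi}_{yy}}_F^2+V\norm{\starbm{\Phi}_{uy}}_F^2$ into $(M+V)(J^\star)^2$, and extracting the constant $20$ from the prefactor under $\eta<\tfrac{1}{5}$. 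The only cosmetic difference is that the paper bounds the free-response and disturbance columns jointly via $\norm{\starbm{\Phi}_{yy}}_F^2+\norm{\starbm{\Phi}_{uy}}_F^2\le (J^\star)^2$ rather than isolating $\norm{\starbm{\Phi}_{yy}\mathbf{y}_{free}}_F$, but this does not change the argument.
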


Theorem~\ref{th:suboptimality} shows that the relative performance of the robust BIOP formulation \eqref{prob:quasi_convex} with respect to its exact non-noisy version \eqref{prob:IOP_data}  decreases linearly with $\epsilon$, as long as $\epsilon$ is small enough to guarantee $\epsilon \norm{\bm{\Phi}_{uy}^\star}_2 < \frac{1}{5}$. The bound also grows quadratically with the norm of the true impulse and free responses, which implies that an unstable system will be difficult to control for a long horizon. Note that it is appropriate to choose $\alpha$ not too large, and specifically $\alpha\leq 5\norm{\bm{\Phi}_{uy}}_2<\epsilon^{-1}$ in order for the scaling of $h(\epsilon,\alpha,\hatbf{G})$ in terms of $\epsilon$ not to dominate over $h(\epsilon,\norm{\starbm{\Phi}_{uy}}_2,\mathbf{G})$. Our rate in terms of $\epsilon$ matches that of \cite{zhengfurieri2020sample,dean2019sample}, which are valid in infinite-horizon. In spite of the additional challenges of considering a noisy unknown initial state $x(0) \in \mathbb{R}^n$ and noisy output-feedback, our rate also matches the one achieved with the approach of \cite{xue2020data} valid for $x(0) = 0$ and state-feedback.  

\begin{remark}[Sample complexity]
In  related work, e.g. \cite{dean2019sample,zhengfurieri2020sample,xue2020data}, the authors more precisely quantify $\epsilon$ and the probability of the estimate to be within the corresponding norm error interval as a function of the noise statistics and the real system parameters, leading to an end-to-end sample complexity analysis. This is achieved by focusing on a specific estimation technique (i.e. least squares in \cite{dean2019sample,zhengfurieri2020sample} and column averaging in \cite{xue2020data}) and the corresponding non-asymptotic norm error bounds \cite{oymak2019non,tropp2012user}. We expect that analogous results can be derived for the least-square choice $(G,g) = (G_{LS},g_{LS})$. However, in this work we wished to focus on the potential generality of the proposed BIOP, i.e., the fact that the approximation of the impulse and free responses is not bound to a specific estimation technique. Hence, here we have limited ourselves to  deriving a suboptimality bound as a function of $\epsilon$, and will not further characterize $\epsilon$ and the success probability, as both are dependent on the chosen estimation technique.
\end{remark}

  \begin{figure*}[th!]
     \centering
     \includegraphics[width = 0.75\textwidth]{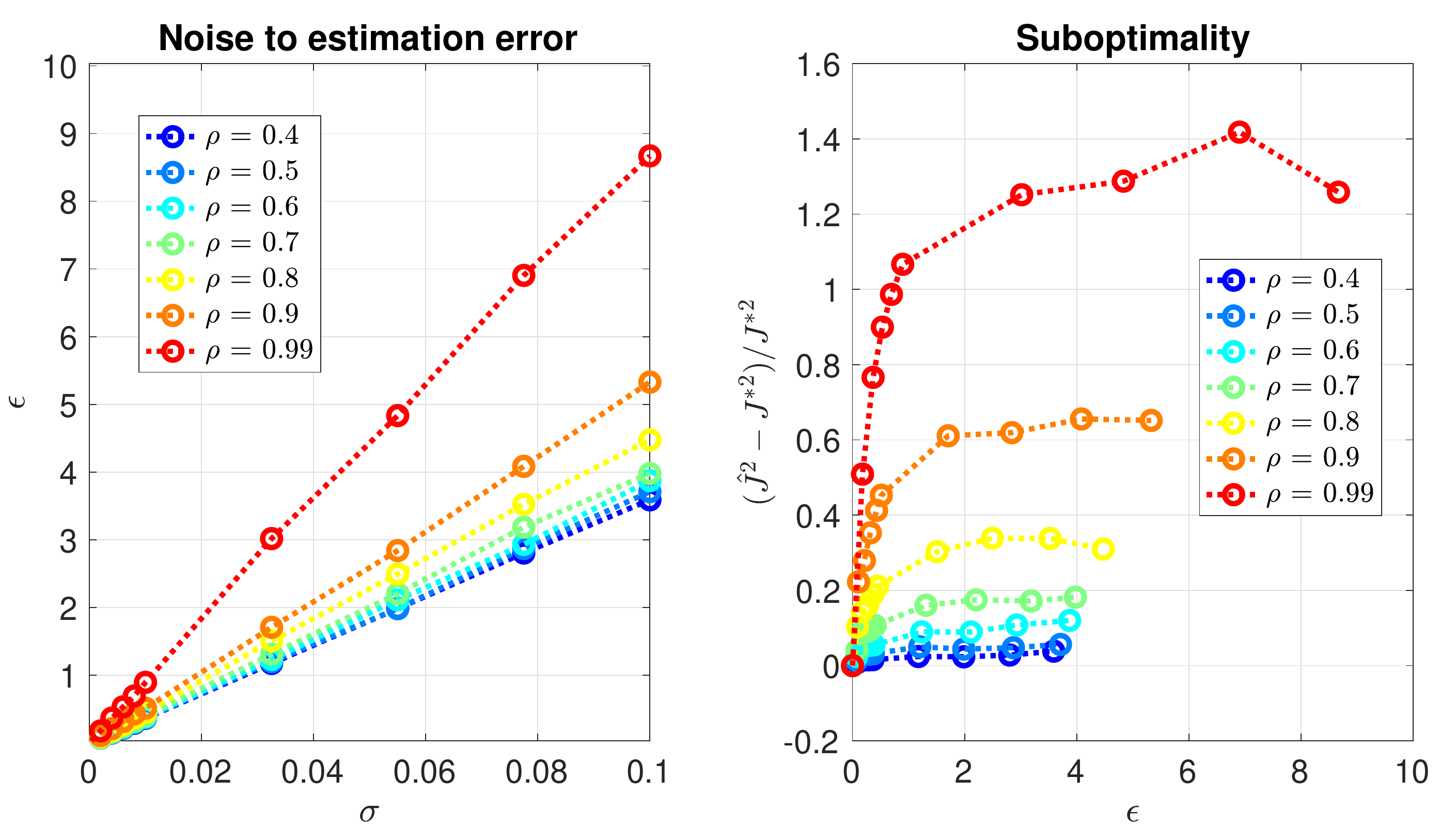}
     
     \caption{ Estimation error as a function of the noise level (on the left). Corresponding suboptimality gap for increasing values of the spectral radius $\rho$ of matrix $A$ (on the right).}
     \label{fig:RobustBIOPPerformance}
 \end{figure*}
\section{Numerical Experiments}
\label{sec:experiments}

In this section we present numerical results validating our theoretical analysis. For solving optimization problems we used MOSEK \cite{mosek}, called through MATLAB via YALMIP \cite{YALMIP} on a standard laptop computer\footnote{The code is open-source and available at \url{https://gitlab.nccr-automation.ch/data-driven-control-epfl/biop}}. Our goals are  1) to verify the noiseless BIOP formulation in Theorem~\ref{th:BIOP} and 2) to validate the suboptimality analysis of Theorem~\ref{th:suboptimality} in the presence of noise-corrupted data. In the experiments, we considered the LTI system characterized by the matrices
\begin{align*}
&A = \rho \begin{bmatrix}
       0.8 &  0.4\\0.8 & -0.6
\end{bmatrix}, \text{ } B =  \begin{bmatrix}
      1 &  0.2\\ 
      2 &  0.3
\end{bmatrix},\text{ } C =  \begin{bmatrix}
       1 & 1\\0.7 &  0.2
\end{bmatrix}\,.
\end{align*}
It can be verified that the value $\rho>0$ corresponds to the spectral radius of $A$. The cost function is given by \eqref{eq:cost_output}, where  $N = 11$ and the cost weights are chosen as $L_t = I_p$  and $R_t=I_m$  for every $t = 0,\ldots,10$. The average in \eqref{eq:cost_output} is taken over future input/output noise with  variances $\Sigma_w = I_m$ and $\Sigma_v = I_m $. Assuming an initial state $x(0) = \begin{bmatrix}1&-1\end{bmatrix}^\mathsf{T}$ and $\rho = 0.99$, the optimal controller $\mathbf{K}^\star$ can be found by solving the model-based optimization problem \eqref{prob:IOP}, and the corresponding optimal cost is $J^\star = 12.8006$. 

Hereafter, we assume that the system parameters $A$, $B$, $C$ and $x(0)$ are completely \emph{unknown}. Instead, the following data are available: 1) a \emph{historical} system trajectory $        \{\mathbf{y}^h_{[0,T-1]},\mathbf{u}^h_{[0,T-1]}\}$, with $\mathbf{y}^h_{[0,T-1]} = \mathbf{y}_{[-T_{h},-T_{h}+T-1]}$ and $\mathbf{u}^h_{[0,T-1]} = \mathbf{u}_{[-T_{h},-T_h+T-1]}$ where $T = 200$ and $T_h = 249$, and 2) a \emph{recent} system trajectory $\{\mathbf{y}^r_{[0,T_{ini}-1]},\mathbf{u}^r_{[0,T_{ini}-1]}\}$, with $\mathbf{y}^r_{[0,T_{ini}-1]} = \mathbf{y}_{[-T_{ini},-1]}$, $\mathbf{u}^r_{[0,T_{ini}-1]} = \mathbf{u}_{[-T_{ini},-1]}$ and $T_{ini} = 30$. When the collected data are \emph{noiseless}, one can compute a solution $(G,g)$ to \eqref{eq:BM}, for instance by using \eqref{eq: LS_for_G}, and solve the optimization problem \eqref{prob:IOP_data} to find the optimal closed-loop responses. In this case, the solution $(\starbm{\Phi}_{yy}, \starbm{\Phi}_{yu}, \starbm{\Phi}_{uy}, \starbm{\Phi}_{uu})$  yields the optimal closed-loop control policy $\mathbf{K}^\star = \starbm{\Phi}_{uy}(\bm{\Phi}^\star_{yy})^{-1}$ and the same optimal cost $J^\star = 12.8006 $ obtained before, as predicted by Theorem~\ref{th:BIOP}.

 We now focus on the case where the historical and recent data are affected by noise. The corrupting noise has zero expected value, and variances equal to $\bm{\Sigma}^h_w=\bm{\Sigma}^r_w=\sigma I_m$, $\bm{\Sigma}^h_v = \bm{\Sigma}^r_v = \sigma I_p$. We analyze performance degradation for increasing values of $\sigma$. First, we note that solving \eqref{prob:IOP_data} with noisy data yields unsatisfactory results; indeed, the problem is often infeasible due to an incoherent estimation of $\mathbf{G}$. Next, we consider the robust formulation of Theorem~\ref{thm: robust formulation}. 
 
We estimate an error level $\epsilon$ that is valid with high probability by computing $$\tilde{\epsilon}=\max \left( \norm{\text{Toep}(\widehat{Y}_fG_{LS})-\mathbf{G}}_2,\norm{\widehat{Y}_fg_{LS}-\mathbf{y}_{free}}_2\right)\,,$$ many times over different realizations of the corrupting noise, and selecting $\epsilon$ as a high percentile value, e.g. the $90$-th percentile, of the $\tilde{\epsilon}$'s.\footnote{ It would be more realistic to implement a proper bootstrap procedure, see \cite{efron1992bootstrap} or the approach in \cite{dean2019sample}.  Since our focus is to validate the theoretical bounds, we leave implementing these methods as future work. } While computing the least square solution $(G_{LS},g_{LS})$ is a viable choice, we highlight that our method is compatible with the more refined approaches of \cite{yin2020maximum,alpago2020extended,coulson2020distributionally} to obtain lower values for $\epsilon$.

 The hyper-parameter $\alpha$ can be tuned manually until satisfactory results are obtained and $\alpha< \epsilon^{-1}$ is verified. In Figure~\ref{fig:RobustBIOPPerformance}, we report the suboptimality gap one incurs by applying the controller $\hatbf{K}^\star$ that solves the robust BIOP \eqref{prob:robust_IOP}. Specifically, for each choice of the spectral radius $\rho= 0.4,0.5,\ldots,0.9,0.99$, we consider increasing levels of the variance $\sigma^2$ of the noise that corrupts the historical and recent data. We first plot the corresponding estimation errors $\epsilon$ on the left part of Figure~\ref{fig:RobustBIOPPerformance}. While observing that $\epsilon$ grows almost linearly with $\sigma$ for any fixed $\rho$, we highlight that a formal analysis of this relationship is beyond the scope of this paper. We then plot the suboptimality gap $\frac{\hat{J}^2-{J^\star}^2}{{J^\star}^2}$ as a function of $\epsilon$ on the right part of Figure~\ref{fig:RobustBIOPPerformance}. It can be observed that, as predicted by Theorem~\ref{th:suboptimality}, 1) the gap linearly converges to $0$ as $\epsilon$ converges to $0$, and 2) for similar values of  $\epsilon$, the gap grows faster than linearly with the spectral radius $\rho$. We finally observe that, in theory, the BIOP and robust BIOP formulations in finite-horizon are valid for unstable systems with $\rho>1$. However, in practice, it is inherently challenging to collect trajectories  of an unstable system, as the values to be plugged into the corresponding numerical programs will become too large to be handled by numerical solvers.  For unstable systems in a data-driven scenario, it is common to assume knowledge of a pre-stabilizing controller \cite{simchowitz2020improper,zhengfurieri2020sample}.

\section{Conclusions}
\label{sec:conclusions}
We have proposed the BIOP, a method for the  design of optimal  output-feedback controllers which directly embeds historical input-output trajectories in its formulation. When these historical data are noiseless, the BIOP is equivalent to the standard IOP and recovers an optimal LQG controller. In the presence of noise-corrupted data, we propose a robust version of the BIOP that explicitly incorporates the estimated uncertainty level and that can be solved efficiently through convex programming. By exploiting recently developed analysis techniques, the suboptimality of the obtained solution is quantified and compared with the nominal LQG solution. Furthermore, the developed framework is readily compatible with state-of-the-art behavioral estimation and prediction techniques, e.g. \cite{coulson2020distributionally,alpago2020extended,yin2020maximum}.

\section*{Acknowledgments}
We thank Mingzhou Yin, Andrea Iannelli and Roy Smith for helpful discussion.

\bibliographystyle{IEEEtran}
\bibliography{references}

 \newpage

\preprintswitch{}{
\appendix

\subsection{Proof of Proposition~\ref{prop:IOP}}
\label{app:prop:IOP}
    For the first statement, notice that the controller $\mathbf{K}$ achieves the closed-loop responses \eqref{eq:CL_responses}. Now select $(\bm{\Phi}_{yy}, \bm{\Phi}_{yu}, \bm{\Phi}_{uy}, \bm{\Phi}_{uu})$ as
    \begin{equation}
    \begin{bmatrix}
           \bm{\Phi}_{yy} & \bm{\Phi}_{yu} \\
            \bm{\Phi}_{uy} & \bm{\Phi}_{uu} 
        \end{bmatrix}=\begin{bmatrix}(I-\mathbf{GK})^{-1} & (I-\mathbf{GK})^{-1}\mathbf{G}\\ \mathbf{K}(I-\mathbf{GK})^{-1} & (I-\mathbf{KG})^{-1}\end{bmatrix}\,.
        \label{eq:CL_responses_proofs}
    \end{equation}
   Clearly, $\mathbf{K} = \bm{\Phi}_{uy}\bm{\Phi}_{yy}^{-1}$, and by plugging the corresponding expressions in \eqref{eq:ach1}-\eqref{eq:ach3}, we verify that \eqref{eq:ach1}-\eqref{eq:ach3} are satisfied.

   For the second statement, it is easy to notice   $\mathbf{K}$ is causal by construction because $\bm{\Phi}_{uy}$ and $\bm{\Phi}_{yy}$ are block lower-triangular. 
   Consider now the equation $\bm{\Phi}_{yy} = (I-\mathbf{GK})^{-1}$ corresponding to the upper-left block of \eqref{eq:CL_responses_proofs}. By selecting the controller $\mathbf{K} = \bm{\Phi}_{uy}\bm{\Phi}_{yy}^{-1}$ one has  \begin{align*}
    (I-\mathbf{G}\bm{\Phi}_{uy}\bm{\Phi}_{yy}^{-1})^{-1} &= (I-\mathbf{G}\bm{\Phi}_{uy}(I+\mathbf{G}\bm{\Phi}_{uy})^{-1})^{-1}\\
    &= ((I+\mathbf{G}\bm{\Phi}_{uy}-\mathbf{G}\bm{\Phi}_{uy})(I+\mathbf{G}\bm{\Phi}_{uy})^{-1})^{-1}\\
    &= I +\mathbf{G}\bm{\Phi}_{uy} = \bm{\Phi}_{yy}\,,
    \end{align*}
    which shows that $\bm{\Phi}_{yy}$ is the closed-loop response from $\mathbf{v}_{[0,N-1]}+\mathbf{CP}_A(:,0)x(0)$ to $\mathbf{y}_{[0,N-1]}$ as per \eqref{eq:CL_responses}. Similar computations for the remaining closed-loop responses conclude the proof.

    \subsection{Proof of Proposition~\ref{prop:strongly_convex}}
Let $\bm{\delta}_y = \mathbf{v}_{[0,N-1]}+\mathbf{CP}_A(:,0)x(0)$ and $\bm{\delta}_u = \mathbf{w}_{[0,N-1]}$. From linearity of the expectation operator it follows that
\begin{align}
&J(\bm{\Phi}_{yy}, \bm{\Phi}_{yu}, \bm{\Phi}_{uy}, \bm{\Phi}_{uu})=\mathbb{E}_{\bm{\delta}_y,\bm{\delta}_u}[\mathbf{y}^\mathsf{T}\mathbf{L}\mathbf{y}+\mathbf{u}^\mathsf{T}\mathbf{R}\mathbf{u}] \nonumber\\
&= \mathbb{E}[\mathbf{L}^{\frac{1}{2}}\bm{\delta}_y^\mathsf{T} \bm{\Phi}_{yy}^\mathsf{T}\bm{\Phi}_{yy}\bm{\delta}_y \mathbf{L}^{\frac{1}{2}}]+\mathbb{E}[\mathbf{L}^{\frac{1}{2}}\bm{\delta}_u^\mathsf{T} \bm{\Phi}_{yu}^\mathsf{T}\bm{\Phi}_{yu}\bm{\delta}_u \mathbf{L}^{\frac{1}{2}}]+\mathbb{E}[\mathbf{R}^{\frac{1}{2}}\bm{\delta}_y^\mathsf{T} \bm{\Phi}_{uy}^\mathsf{T}\bm{\Phi}_{uy}\bm{\delta}_y \mathbf{R}^{\frac{1}{2}}]+\mathbb{E}[\mathbf{R}^{\frac{1}{2}}\bm{\delta}_u^\mathsf{T} \bm{\Phi}_{uu}^\mathsf{T}\bm{\Phi}_{uu}\bm{\delta}_u \mathbf{R}^{\frac{1}{2}}]\nonumber\\
&= \mathbf{L}^{\frac{1}{2}}\mathbb{E}[\bm{\delta}_y^\mathsf{T} \bm{\Phi}_{yy}^\mathsf{T}\bm{\Phi}_{yy}\bm{\delta}_y]\mathbf{L}^{\frac{1}{2}} + \mathbf{L}^{\frac{1}{2}}\mathbb{E}[\bm{\delta}_u^\mathsf{T} \bm{\Phi}_{yu}^\mathsf{T}\bm{\Phi}_{yu}\bm{\delta}_u]\mathbf{L}^{\frac{1}{2}} + \mathbf{R}^{\frac{1}{2}}\mathbb{E}[\bm{\delta}_y^\mathsf{T} \bm{\Phi}_{uy}^\mathsf{T}\bm{\Phi}_{uy}\bm{\delta}_y]\mathbf{R}^{\frac{1}{2}} + \mathbf{R}^{\frac{1}{2}}\mathbb{E}[\bm{\delta}_u^\mathsf{T} \bm{\Phi}_{uu}^\mathsf{T}\bm{\Phi}_{uu}\bm{\delta}_u]\mathbf{R}^{\frac{1}{2}}\,.\label{eq:cost_addends}
\end{align}
Focusing, for example, on the first addend we have
\begin{align*}
    \mathbf{L}^{\frac{1}{2}}\mathbb{E}[\bm{\delta}_y^\mathsf{T} \bm{\Phi}_{yy}^\mathsf{T}\bm{\Phi}_{yy}\bm{\delta}_y]\mathbf{L}^{\frac{1}{2}} &= \mathbf{L}^{\frac{1}{2}}[\Tr(\bm{\Phi}_{yy}^\mathsf{T}\bm{\Phi}_{yy}\bm{\Sigma}_v) + (\mathbf{CP}_{A}(:,0)x(0))^\mathsf{T} \bm{\Phi}_{yy}^\mathsf{T}\bm{\Phi}_{yy}\mathbf{CP}_{A}(:,0)x(0) ]\mathbf{L}^{\frac{1}{2}}\\
    &= \mathbf{L}^{\frac{1}{2}}\frobenius{\bm{\Phi}_{yy}\bm{\Sigma}_v^{\frac{1}{2}}}^2\mathbf{L}^{\frac{1}{2}} + \mathbf{L}^{\frac{1}{2}}\normtwo{\bm{\Phi}_{yy}\mathbf{CP}_{A}(:,0)x(0)}^2\mathbf{L}^{\frac{1}{2}}\\
    &= \mathbf{L}^{\frac{1}{2}}\frobenius{\bm{\Phi}_{yy}\bm{\Sigma}_v^{\frac{1}{2}}}^2\mathbf{L}^{\frac{1}{2}} + \mathbf{L}^{\frac{1}{2}}\frobenius{\bm{\Phi}_{yy}\mathbf{CP}_{A}(:,0)x(0)}^2\mathbf{L}^{\frac{1}{2}}\,,
\end{align*}
where the first equality follows from $\mathbb{E}_{x}(x^\mathsf{T}Mx) = \Tr(M\Sigma_x) + \mu_x^\mathsf{T}M\mu_x$, where $\Sigma_x$ and $\mu_x$ are the variance and expected value of the random variable $x$ respectively, while the third equality uses the fact that for vectors $x \in \mathbb{R}^n$ we have $\norm{x}_2= \norm{x}_F$. Similar computations hold for the remaining terms of \eqref{eq:cost_addends}. In total, since $\bm{\delta}_u$ has zero mean, the cost is made up of six addends. Since they are all convex functions of $(\bm{\Phi}_{yy}, \bm{\Phi}_{yu}, \bm{\Phi}_{uy}, \bm{\Phi}_{uu})$, and $\mathbf{R}^{\frac{1}{2}}\norm{\bm{\Phi}_{uy}}_F^2 \bm{\Sigma}_v^{\frac{1}{2}}$ is strongly convex, then $J(\cdot)$ is strongly convex and admits a unique global optimum. By using the property that 
\begin{equation*}
    \norm{M}_F^2 + \norm{N}_F^2 = \norm{\begin{bmatrix}M&N\end{bmatrix}}_F^2 = \norm{\begin{bmatrix}M\\N\end{bmatrix}}_F^2\,,
\end{equation*}
we can rewrite the six addends of the cost compactly as the squared Frobenius norm of the $2 \times 3$ block-matrix in \eqref{prob:IOP}.

\subsection{Proof of Proposition~\ref{pr:robust_IOP}}
First, we verify by direct inspection that for any $\mathbf{K}$, the parameters \begin{equation*}
 \hatbm{\Phi}=\begin{bmatrix}(I-\hatbf{G}\mathbf{K})^{-1} & (I-\hatbf{G}\mathbf{K})^{-1}\hatbf{G}\\ \mathbf{K}(I-\hatbf{G}\mathbf{K})^{-1} & (I-\mathbf{KG})^{-1}\end{bmatrix}\,.
\end{equation*} 
satisfy the constraints of \eqref{prob:robust_IOP} and are such that $\mathbf{K} = \hatbm{\Phi}_{uy}\hatbm{\Phi}_{yy}^{-1}$. Therefore, every controller $\mathbf{K}$ is parametrized in problem \eqref{prob:robust_IOP}, irrespective of $\hatbf{G}$. 

We know that for any $\mathbf{K}$, the cost $J(\mathbf{G},\mathbf{K})$ is equivalent to 
\begin{equation}
\label{eq:cost_proof}
    \frobenius{\begin{bmatrix}(I-\mathbf{GK})^{-1} & (I-\mathbf{GK})^{-1}\mathbf{G}\\ \mathbf{K}(I-\mathbf{GK})^{-1} & (I-\mathbf{KG})^{-1}\end{bmatrix}\begin{bmatrix}I&0&\mathbf{CP}_{A}(:,0)x(0)\\0&I&0\end{bmatrix}}\,,
\end{equation}
Now,  we notice that $\mathbf{G} = \hatbf{G}+\bm{\Delta}$, $\mathbf{y}_{free}= \hatbf{y}_{free}+\bm{\delta}_0$ and substitute into \eqref{eq:cost_proof}. We obtain:

\begin{equation*}
    \begin{split}
        \bm{\Phi}_{yy} &= (I-\mathbf{GK})^{-1} = \left(I-(\hatbf{G} + \mathbf{\Delta})\hatbm{\Phi}_{uy}\hatbm{\Phi}_{yy}^{-1}\right)^{-1} = \left(I-\hatbf{G} \hatbm{\Phi}_{uy}\hatbm{\Phi}_{yy}^{-1} - \mathbf{\Delta}\hatbm{\Phi}_{uy}\hatbm{\Phi}_{yy}^{-1}\right)^{-1}\\
        &= \left((\underbrace{\hatbm{\Phi}_{yy}-\hatbf{G} \hatbm{\Phi}_{uy}}_{I} - \mathbf{\Delta}\hatbm{\Phi}_{uy})\hatbm{\Phi}_{yy}^{-1}\right)^{-1} = \hatbm{\Phi}_{yy}\left(I - \mathbf{\Delta}\hatbm{\Phi}_{uy}\right)^{-1}\,,\\
        \bm{\Phi}_{yu} &= (I-\mathbf{GK})^{-1}\mathbf{G} = \bm{\Phi}_{yy}\mathbf{G} = \hatbm{\Phi}_{yy}\left(I - \mathbf{\Delta}\hatbm{\Phi}_{uy}\right)^{-1}(\hatbf{G} + \mathbf{\Delta})\,,\\
        \bm{\Phi}_{uy} &= \mathbf{K}(I-\mathbf{GK})^{-1} = \mathbf{K}\bm{\Phi}_{yy} = \hatbm{\Phi}_{uy}\hatbm{\Phi}_{yy}^{-1}\hatbm{\Phi}_{yy}\left(I - \mathbf{\Delta}\hatbm{\Phi}_{uy}\right)^{-1} = \hatbm{\Phi}_{uy}\left(I - \mathbf{\Delta}\hatbm{\Phi}_{uy}\right)^{-1}\,,\\
        \bm{\Phi}_{uu} &= (I-\mathbf{KG})^{-1} = \mathbf{K}(I-\mathbf{GK})^{-1}\mathbf{G} + I = \bm{\Phi}_{uy}\mathbf{G} + I = \hatbm{\Phi}_{uy}\left(I - \mathbf{\Delta}\hatbm{\Phi}_{uy}\right)^{-1}(\hatbf{G} + \mathbf{\Delta}) + I\\
        &= \left(I - \hatbm{\Phi}_{uy}\mathbf{\Delta}\right)^{-1}\hatbm{\Phi}_{uy}(\hatbf{G} + \mathbf{\Delta}) + I = \left(I - \hatbm{\Phi}_{uy}\mathbf{\Delta}\right)^{-1}(\hatbm{\Phi}_{uy}\hatbf{G} + \hatbm{\Phi}_{uy}\mathbf{\Delta} + I - \hatbm{\Phi}_{uy}\mathbf{\Delta})\\
        &= \left(I - \hatbm{\Phi}_{uy}\mathbf{\Delta}\right)^{-1}(\hatbm{\Phi}_{uy}\hatbf{G} + I) = \left(I - \hatbm{\Phi}_{uy}\mathbf{\Delta}\right)^{-1} \hatbm{\Phi}_{uu}\,.
    \end{split}
\end{equation*}
This concludes the proof.

\subsection{Proof of Lemma~\ref{le:upperbound}}
The objective function in Proposition~\ref{pr:robust_IOP} can be written as 
\begin{equation*}
        J(\mathbf{G},\mathbf{K}) 
        = \left\|\begin{bmatrix} \hatbm{\Phi}_{yy}(I-\mathbf{\Delta}\hatbm{\Phi}_{uy})^{-1} & \hatbm{\Phi}_{yy}(I-\mathbf{\Delta}\hatbm{\Phi}_{uy})^{-1}(\hatbf{G}+\mathbf{\Delta}) &  \hatbm{\Phi}_{yy}(I-\mathbf{\Delta}\hatbm{\Phi}_{uy})^{-1}(\hatbf{y}_{free}+\bm{\delta}_0)\\
    \hatbm{\Phi}_{uy}(I-\mathbf{\Delta}\hatbm{\Phi}_{uy})^{-1} & (I-\hatbm{\Phi}_{uy}\mathbf{\Delta})^{-1}\hatbm{\Phi}_{uu} & \hatbm{\Phi}_{uy}(I-\mathbf{\Delta}\hatbm{\Phi}_{uy})^{-1}(\hatbf{y}_{free}+\bm{\delta}_0)\end{bmatrix}\right\|_F\,,
\end{equation*}
or, equivalently, as the square-root of the sum of the square of the Frobenius norms of each of its six blocks. For the upper-left block, we have
\begin{align*}
    \|\hatbm{\Phi}_{yy}(I-\mathbf{\Delta}\hatbm{\Phi}_{uy})^{-1}\|_F &\leq \|\hatbm{\Phi}_{yy}\|_F \norm{\sum_{k=0}^\infty (\bm{\Delta} \hatbm{\Phi}_{uy})^k}_2 \leq \|\hatbm{\Phi}_{yy}\|_F \sum_{k=0}^\infty \norm{(\epsilon_G \hatbm{\Phi}_{uy})}^k_2 = \frac{\|\hatbm{\Phi}_{yy}\|_F}{1-\epsilon_G \|\hatbm{\Phi}_{uy}\|_2}\leq \frac{\|\hatbm{\Phi}_{yy}\|_F}{1-\epsilon \|\hatbm{\Phi}_{uy}\|_2}\,,
\end{align*}
where the convergence of the Neumann series follows from $\bm{\Delta}$ and $\hatbm{\Phi}_{uy}$ having zero-entries diagonal blocks by construction. Similarly
\begin{align*}
    &\|\hatbm{\Phi}_{uy}(I-\mathbf{\Delta}\hatbm{\Phi}_{uy})^{-1}\|_F \leq \frac{\|\hatbm{\Phi}_{uy}\|_F}{1-\epsilon \|\hatbm{\Phi}_{uy}\|_2}\,,\\
    &\|(I-\hatbm{\Phi}_{uy}\mathbf{\Delta})^{-1}\hatbm{\Phi}_{uu}\|_F \leq \frac{\|\hatbm{\Phi}_{uu}\|_F}{1-\epsilon \|\hatbm{\Phi}_{uy}\|_2}\,.
\end{align*}
Next, we have
\begin{align*}
       \|\hatbm{\Phi}_{yy}(I-\mathbf{\Delta}\hatbm{\Phi}_{uy})^{-1}(\hatbf{G}+\mathbf{\Delta})\|_F
       &\leq \|\hatbm{\Phi}_{yy}\hatbf{G}\|_F +  \|\hatbm{\Phi}_{yy}\mathbf{\Delta}\|_F + \left\|\hatbm{\Phi}_{yy}\left(\sum_{k=1}^\infty(\mathbf{\Delta}\hatbm{\Phi}_{uy})^k\right) (\hatbf{G}+\mathbf{\Delta})\right\|_F \\
       &\leq \|\hatbm{\Phi}_{yu}\|_F +  \epsilon \|\hatbm{\Phi}_{yy}\|_F + \|\hatbm{\Phi}_{yy}\|_F\left(\sum_{k=1}^\infty \epsilon^k \|\hatbm{\Phi}_{uy}\|_2^k\right) (\|\hatbf{G}\|_2+\epsilon) \\
       &= \|\hatbm{\Phi}_{yu}\|_F + \epsilon \|\hatbm{\Phi}_{yy}\|_F + \|\hatbm{\Phi}_{yy}\|_F \frac{\epsilon \|\hatbm{\Phi}_{uy}\|_2(\|\hatbf{G}\|_2 + \epsilon)}{1 - \epsilon \|\hatbm{\Phi}_{uy}\|_2}\\
       &\leq  \frac{\|\hatbm{\Phi}_{yu}\|_F + \epsilon \|\hatbm{\Phi}_{yy}\|_F + \epsilon \|\hatbm{\Phi}_{yy}\|_F \|\hatbm{\Phi}_{uy}\|_2(\|\hatbf{G}\|_2 + \epsilon)}{1 - \epsilon \|\hatbm{\Phi}_{uy}\|_2} \\
       &= \frac{\|\hatbm{\Phi}_{yu}\|_F + \epsilon \|\hatbm{\Phi}_{yy}\|_F + \epsilon \|\hatbm{\Phi}_{yy}\|_F \|\hatbm{\Phi}_{uy}\|_2\|\hatbf{G}\|_2   + \epsilon^2\|\hatbm{\Phi}_{uy}\|_2 \|\hatbm{\Phi}_{yy}\|_F  }{1 - \epsilon \|\hatbm{\Phi}_{uy}\|_2} \\
       & \leq  \frac{\|\hatbm{\Phi}_{yu}\|_F + \epsilon \|\hatbm{\Phi}_{yy}\|_F (2 + \|\hatbm{\Phi}_{uy}\|_2\|\hatbf{G}\|_2 )}{1 - \epsilon \|\hatbm{\Phi}_{uy}\|_2}\,,
\end{align*}
and 
\begin{align*}
        &\quad \|\hatbm{\Phi}_{yy}(I-\mathbf{\Delta}\hatbm{\Phi}_{uy})^{-1}(\hatbf{G}+\mathbf{\Delta})\|_F^2\\ 
        &\leq \frac{1}{(1-\epsilon \|\hatbm{\Phi}_{uy}\|_2)^2}\left(\|\hatbm{\Phi}_{yu}\|_F^2 + 2\epsilon \|\hatbm{\Phi}_{yu}\|_F\|\hatbm{\Phi}_{yy}\|_F (2 + \|\hatbm{\Phi}_{uy}\|_2\|\hatbf{G}\|_2 ) + \left(\epsilon \|\hatbm{\Phi}_{yy}\|_F (2 + \|\hatbm{\Phi}_{uy}\|_2\|\hatbf{G}\|_2 )\right)^2
        \right)\\
        &\leq \frac{1}{(1-\epsilon \|\hatbm{\Phi}_{uy}\|_2)^2}\left(\|\hatbm{\Phi}_{yu}\|_F^2 + 2\epsilon \underbrace{\|\hatbm{\Phi}_{yy}\hatbf{G}\|_F}_{\leq \|\hatbm{\Phi}_{yy}\|_F \|\hatbf{G}\|_{2}}\|\hatbm{\Phi}_{yy}\|_F (2 + \|\hatbm{\Phi}_{uy}\|_2\|\hatbf{G}\|_2 ) + \left(\epsilon \|\hatbm{\Phi}_{yy}\|_F (2 + \|\hatbm{\Phi}_{uy}\|_2\|\hatbf{G}\|_2 )\right)^2
        \right)\\
        &\leq \frac{1}{(1-\epsilon \|\hatbm{\Phi}_{uy}\|_2)^2}\left(\|\hatbm{\Phi}_{yu}\|_F^2\|_2 + \|\hatbm{\Phi}_{yy}\|_F^2\left(2\epsilon  \|\hatbf{G}\|_2 (2 + \alpha\|\hatbf{G}\|_2 ) + \epsilon^2(2 + \alpha\|\hatbf{G}\|_2)^2\right)\right)\\
        &= \frac{1}{(1-\epsilon \|\hatbm{\Phi}_{uy}\|_2)^2}\left(\|\hatbm{\Phi}_{yu}\|_F^2\|_2 + \|\hatbm{\Phi}_{yy}\|_F^2 h(\epsilon,\alpha,\hatbf{G})\right)\,.
\end{align*}
Proceeding analogously, one can also prove that
\begin{align*}
    & \|\hatbm{\Phi}_{yy}(I-\mathbf{\Delta}\hatbm{\Phi}_{uy})^{-1}(\hatbm{y}_{free}+\bm{\delta}_0)\|_F \leq   \frac{\|\hatbm{\Phi}_{yy} \hatbf{y}_{free}\|_F + \epsilon \|\hatbm{\Phi}_{yy}\|_F (2 + \|\hatbm{\Phi}_{uy}\|_2\|\hatbf{y}_{free}\|_2 )}{1 - \epsilon \|\hatbm{\Phi}_{uy}\|_2}\,,\\
    & \|\hatbm{\Phi}_{uy}(I-\mathbf{\Delta}\hatbm{\Phi}_{uy})^{-1}(\hatbm{y}_{free}+\bm{\delta}_0)\|_F \leq   \frac{\|\hatbm{\Phi}_{uy} \hatbf{y}_{free}\|_F + \epsilon \|\hatbm{\Phi}_{uy}\|_F (2 + \|\hatbm{\Phi}_{uy}\|_2\|\hatbf{y}_{free}\|_2 )}{1 - \epsilon \|\hatbm{\Phi}_{uy}\|_2}\,,\\
    & \|\hatbm{\Phi}_{yy}(I-\mathbf{\Delta}\hatbm{\Phi}_{uy})^{-1}(\hatbm{y}_{free}+\bm{\delta}_0)\|_F^2 \leq \frac{1}{(1-\epsilon \|\hatbm{\Phi}_{uy}\|_2)^2}\left(\|\hatbm{\Phi}_{yy} \hatbf{y}_{free}\|_F^2 + \|\hatbm{\Phi}_{yy}\|_F^2 h(\epsilon,\alpha,\hatbf{y}_{free})\right) \,,\\
    & \|\hatbm{\Phi}_{uy}(I-\mathbf{\Delta}\hatbm{\Phi}_{uy})^{-1}(\hatbm{y}_{free}+\bm{\delta}_0)\|_F^2 \leq \frac{1}{(1-\epsilon \|\hatbm{\Phi}_{uy}\|_2)^2}\left(\|\hatbm{\Phi}_{uy} \hatbf{y}_{free}\|_F^2 + \|\hatbm{\Phi}_{uy}\|_F^2 h(\epsilon,\alpha,\hatbf{y}_{free})\right) \,.
\end{align*}
Therefore, combining the above inequalities we finally conclude that
\begin{align*}
    &J(\mathbf{G},\mathbf{K})\\
    &\leq \frac{1}{1-\epsilon \|\hatbm{\Phi}_{uy}\|_2}\sqrt{\norm{\begin{bmatrix}
        	  \hatbm{\Phi}_{yy} & \hatbm{\Phi}_{yu} & \hatbm{\Phi}_{yy}\hatbf{y}_{free}\\\hatbm{\Phi}_{uy} & \hatbm{\Phi}_{uu} & \hatbm{\Phi}_{uy}\hatbf{y}_{free}
        	 \end{bmatrix}}_F^2 +
    \|\hatbm{\Phi}_{yy}\|_F^2 (h(\epsilon,\alpha,\hatbf{G}) +
    h(\epsilon,\alpha,\hatbf{y}_{free})) + 
    \|\hatbm{\Phi}_{uy}\|_F^2 h(\epsilon,\alpha,\hatbf{y}_{free})\,.
   }
\end{align*}

\subsection{Proof of Lemma~\ref{le:feasible}}
First, it is easy to verify that $\tildebm{\Phi}$ satisfies the affine constraints in~\eqref{prob:quasi_convex}; indeed, $\tildebm{\Phi}$ is defined to be the closed-loop responses when we apply $\mathbf{K}^\star$ to the estimated plant $\hatbf{G}$. Then, since $\eta< \frac{1}{5}$, it is easy to verify that $\widetilde{\gamma} \leq \epsilon^{-1}$. It remains to  show that $\normtwo{\tildebm{\Phi}_{uy}} \leq \min(\widetilde{\gamma},\alpha)$: it holds

\begin{align*}
    \normtwo{\tildebm{\Phi}_{uy}} &= \normtwo{\starbm{\Phi}_{uy}(I+\bm{\Delta} \starbm{\Phi}_{uy})^{-1}}\\
    &\leq \frac{\normtwo{\starbm{\Phi}_{uy}}}{1-\epsilon \normtwo{\starbm{\Phi}_{uy}}} \leq  \sqrt{2}\frac{\normtwo{\starbm{\Phi}_{uy}}}{1-\epsilon \normtwo{\starbm{\Phi}_{uy}}}\\
    &= \sqrt{2} \frac{\eta}{\epsilon(1-\eta)} = \widetilde{\gamma}\leq \alpha\,.
\end{align*}

\subsection{Proof of Theorem~\ref{th:suboptimality}}
The key of the proof is to find a useful relationship between $J(\mathbf{G},\hatbf{K}^\star)$ and $J(\mathbf{G},\starbf{K})$, by exploiting the fact that we know a suboptimal solution to \eqref{prob:quasi_convex} by Lemma~\ref{le:feasible}.   Using the assumption $\eta <\frac{1}{5}$ so that $ \alpha \geq \frac{5\sqrt{2}}{4} \normtwo{\starbm{\Phi}_{uy}} \geq \frac{\sqrt{2} \norm{\bm{\Phi}^\star_{uy}}_2}{1-\eta}=\sqrt{2}\frac{\eta}{\epsilon(1-\eta)} = \widetilde{\gamma}$,  we have
\begin{align*}
    J(\mathbf{G},\starhatbf{K})
        	 &\leq \frac{1}{1-\epsilon \gamma^\star}\norm{\begin{bmatrix}
        	  \sqrt{1+h(\epsilon,\alpha,\hatbf{G}) + h(\epsilon,\alpha,\hatbf{y}_{free})}\starhatbm{\Phi}_{yy} & \starhatbm{\Phi}_{yu} & \starhatbm{\Phi}_{yy}\hatbf{y}_{free}\\\sqrt{1+ h(\epsilon,\alpha,\hatbf{y}_{free})}\starhatbm{\Phi}_{uy} & \starhatbm{\Phi}_{uu} & \starhatbm{\Phi}_{uy}\hatbf{y}_{free}
        	 \end{bmatrix}}_F\\
        	 &\leq \frac{1}{1-\epsilon \widetilde{\gamma}}\norm{\begin{bmatrix}
        	  \sqrt{1+h(\epsilon,\alpha,\hatbf{G}) + h(\epsilon,\alpha,\hatbf{y}_{free})}\tildebm{\Phi}_{yy} & \tildebm{\Phi}_{yu} & \tildebm{\Phi}_{yy}\hatbf{y}_{free}\\\sqrt{1+ h(\epsilon,\alpha,\hatbf{y}_{free})}\tildebm{\Phi}_{uy} & \tildebm{\Phi}_{uu} & \tildebm{\Phi}_{uy}\hatbf{y}_{free}
        	 \end{bmatrix}}_F\,,
\end{align*}
where $\gamma^\star$ is optimal for \eqref{prob:quasi_convex}, and the second inequality holds because $(\gamma^\star,\starhatbm{\Phi})$ represents the optimal solution to \eqref{prob:quasi_convex} and $(\widetilde{\gamma},\tildebm{\Phi})$ is a suboptimal feasible solution of \eqref{prob:quasi_convex} by Lemma~\ref{le:feasible}.  Using the definition of $\tildebm{\Phi}$ from Lemma~\ref{le:feasible}, we now relate the term
\begin{equation*}
    \widetilde{C} = \norm{\begin{bmatrix}
        	  \sqrt{1+h(\epsilon,\alpha,\hatbf{G}) + h(\epsilon,\alpha,\hatbf{y}_{free})}\tildebm{\Phi}_{yy} & \tildebm{\Phi}_{yu} & \tildebm{\Phi}_{yy}\hatbf{y}_{free}\\\sqrt{1+ h(\epsilon,\alpha,\hatbf{y}_{free})}\tildebm{\Phi}_{uy} & \tildebm{\Phi}_{uu} & \tildebm{\Phi}_{uy}\hatbf{y}_{free}
        	 \end{bmatrix}}_F\,,
\end{equation*}
to the optimal cost of problem~\eqref{prob:IOP}. By defining
\begin{equation*}
    M = h(\epsilon,\alpha,\hatbf{G}) + h(\epsilon,\alpha,\hatbf{y}_{free})+ h(\epsilon,\norm{\starbm{\Phi}_{uy}}_2,\mathbf{G}) + h(\epsilon,\norm{\starbm{\Phi}_{uy}}_2,\mathbf{y}_{free})\,,
\end{equation*}
and
\begin{equation*}
    V = h(\epsilon,\alpha,\hatbf{y}_{free})+h(\epsilon,\norm{\starbm{\Phi}_{uy}}_2,\mathbf{y}_{free})\,,
\end{equation*}
we derive
\begin{align*}
    \widetilde{C} &= \sqrt{ \norm{\begin{bmatrix}
        	  \tildebm{\Phi}_{yy} & \tildebm{\Phi}_{yu} & \tildebm{\Phi}_{yy}\hatbf{y}_{free}\\\tildebm{\Phi}_{uy} & \tildebm{\Phi}_{uu} & \tildebm{\Phi}_{uy}\hatbf{y}_{free}
        	 \end{bmatrix}}_F^2 + \left(h(\epsilon,\alpha,\hatbf{G}) + h(\epsilon,\alpha,\hatbf{y}_{free})\right)\norm{\tildebm{\Phi}_{yy}}_F^2+ h(\epsilon,\alpha,\hatbf{y}_{free})\norm{\tildebm{\Phi}_{uy}}_F^2}\\
        	 &\leq  \frac{1}{1-\epsilon\norm{\starbm{\Phi}_{uy}}_2}\sqrt{J(\mathbf{G},\starbf{K})^2+ M \norm{\starbm{\Phi}_{yy}}_F^2 + V \norm{\starbm{\Phi}_{uy}}_F^2}\,,
\end{align*}
where the bound 
\begin{align*}
   (1-\epsilon \norm{\starbm{\Phi}_{uy}}_2)^2 \norm{\begin{bmatrix}
        	  \tildebm{\Phi}_{yy} & \tildebm{\Phi}_{yu} & \tildebm{\Phi}_{yy}\hatbf{y}_{free}\\\tildebm{\Phi}_{uy} & \tildebm{\Phi}_{uu} & \tildebm{\Phi}_{uy}\hatbf{y}_{free}
        	 \end{bmatrix}}_F^2\leq &J(\mathbf{G},\mathbf{K}^\star)^2+( h(\epsilon,\norm{\starbm{\Phi}_{uy}}_2,\mathbf{G}) + h(\epsilon,\norm{\starbm{\Phi}_{uy}}_2,\mathbf{y}_{free}))\norm{\starbm{\Phi}_{yy}}_F^2+\\
        	 &+h(\epsilon,\norm{\starbm{\Phi}_{uy}}_2,\mathbf{y}_{free})\norm{\starbm{\Phi}_{uy}}_F^2\,,
\end{align*}
is derived in the same way as in Lemma~\ref{le:upperbound}, by using the expressions in Lemma~\ref{le:feasible}.

Thus, we have established the chain of inequalities
\begin{align*}
    J(\mathbf{G},\starhatbf{K}) \leq \frac{1}{1-\epsilon \widetilde{\gamma}} \widetilde{C} \leq \frac{1}{1-\epsilon \widetilde{\gamma}} \frac{1}{1-\epsilon\norm{\starbm{\Phi}_{uy}}_2}\sqrt{J(\mathbf{G},\starbf{K})^2+ M \norm{\starbm{\Phi}_{yy}}_F^2 + V \norm{\starbm{\Phi}_{uy}}_F^2}\,.
\end{align*}
Taking the squares, recalling that $\eta < \frac{1}{5}$, 
and using the fact that if $M,V>0$, then
\begin{equation*}
    Ma^2+Vb^2 \leq (M+V)(a^2+b^2)\,,
\end{equation*}
we derive
\begin{align*}
    \frac{J(\mathbf{G},\starhatbf{K})^2-J(\mathbf{G},\starbf{K})^2}{J(\mathbf{G},\starbf{K})^2}
     &\leq \left(\frac{1}{(1-\epsilon \normtwo{\starbm{\Phi}_{uy}})^2(1-\epsilon \widetilde{\gamma})^2}\right) \left(1+\frac{M\norm{\starbm{\Phi}_{yy}}_F^2+V\norm{\starbm{\Phi}_{uy}}_F^2}{J(\mathbf{G},\mathbf{K}^\star)^2}\right)-1\\
    &\leq  \eta \left(\frac{2(1+\sqrt{2})-(1+\sqrt{2})^2 \eta}{(1-(1+\sqrt{2})\eta)^2}\right) +\frac{M\norm{\starbm{\Phi}_{yy}}_F^2+V\norm{\starbm{\Phi}_{uy}}_F^2}{J(\mathbf{G},\mathbf{K}^\star)^2(1-(1+\sqrt{2})\eta)^2}\\
    &\leq   \eta \left(\frac{2(1+\sqrt{2})-(1+\sqrt{2})^2 \eta}{(1-(1+\sqrt{2})\eta)^2}\right) +\frac{M+V}{(1-(1+\sqrt{2})\eta)^2}\\
    &\leq   20\eta +4(M+V)\,.
\end{align*}
Last, we prove that $20\eta + 4(M+V) = \mathcal{O}\left(\epsilon \norm{\starbm{\Phi}_{uy}}_2(\norm{\mathbf{G}}^2_2+\norm{\mathbf{y}_{free}}^2_2)\right)$.  By considering the expressions of $M$ and $V$, using $\alpha\leq 5\norm{\starbm{\Phi}_{uy}}_2$, $\eta<\frac{1}{5}$, $\norm{\hatbf{G}} \leq \norm{\mathbf{G}}+\epsilon$ and $\norm{\hatbf{y}_{free}} \leq \norm{\mathbf{y}_{free}}+\epsilon$, we deduce that:
\begin{align*}
    &M = h(\epsilon,\alpha\,\hatbf{G}) + h(\epsilon,\alpha,\hatbf{y}_{free})+ h(\epsilon,\norm{\starbm{\Phi}_{uy}}_2,\mathbf{G})+ h(\epsilon,\norm{\starbm{\Phi}_{uy}}_2,\mathbf{y}_{free})\\
    &\leq 2\Big[\epsilon^2(2 \hspace{-0.05cm}+\hspace{-0.05cm} 5\norm{\starbm{\Phi}_{uy}}_2\|\mathbf{G}\|_2 )^2+2\epsilon\norm{\mathbf{G}}_2(2\hspace{-0.05cm}+\hspace{-0.05cm}5\norm{\starbm{\Phi}_{uy}}_2\norm{\mathbf{G}}_2)+\epsilon^2(2 + 5\norm{\starbm{\Phi}_{uy}}_2\|\mathbf{y}_{free}\|_2 )^2+\\
    &~~~+2\epsilon\norm{\mathbf{y}_{free}}_2(2+5\norm{\starbm{\Phi}_{uy}}_2\norm{\mathbf{y}_{free}}_2)\Big] +\mathcal{O}(\epsilon^2 \norm{\starbm{\Phi}_{uy}}_2(\norm{\mathbf{G}}^2_2 + \norm{\mathbf{y}_{free}}^2_2))\\
    &= \mathcal{O}\left(\epsilon \norm{\starbm{\Phi}_{uy}}_2(\norm{\mathbf{G}}^2_2 + \norm{\mathbf{y}_{free}}^2_2)\right)\,,
\end{align*}
and similarly $V= \mathcal{O}\left(\epsilon \norm{\starbm{\Phi}_{uy}}_2\norm{\mathbf{y}_{free}}^2_2\right)$. The result follows.

}

\end{document}